\documentclass[12pt, draftclsnofoot, onecolumn]{IEEEtran}
\makeatletter
\newcommand*{\rom}[1]{\expandafter\@slowromancap\romannumeral #1@}
\makeatother
\IEEEoverridecommandlockouts
\ifCLASSINFOpdf
\else
\fi

\usepackage{float}
\usepackage{subfloat}
\usepackage[table]{xcolor}
\usepackage{graphics}
\usepackage{multicol}
\usepackage{lipsum}
\usepackage{color}
\usepackage[cmex10]{amsmath}
\usepackage{amsthm}
\usepackage{amssymb}
\usepackage{mathrsfs}
\usepackage{mathtools}
\usepackage{amsbsy}
\usepackage{xcolor}
\usepackage{mathrsfs}
\usepackage{setspace}
\usepackage{graphicx}
\usepackage{epstopdf}
\usepackage{lettrine}
\usepackage{psfrag}
\usepackage{newclude}
\usepackage[normalem]{ulem}
\usepackage{latexsym}
\usepackage{algpseudocode}
\usepackage{algorithm,algpseudocode}
\usepackage{algorithmicx}

\usepackage{multirow}
\usepackage{rotating}
\usepackage{booktabs}
\usepackage{amsmath}
\usepackage{wasysym}
\usepackage{mathrsfs}
\usepackage{bbm} 
\usepackage{filecontents}
\usepackage{amsmath,epsfig,cite,amsfonts,amssymb,psfrag,color}

\usepackage{apptools}

\usepackage{chngcntr}
\usepackage{optidef}
\ifCLASSOPTIONcompsoc
\usepackage[caption=false,font=normalsize,labelfon
t=sf,textfont=sf]{subfig}
\else
\usepackage[caption=false,font=footnotesize]{subfig}
\fi
\def\BibTeX{{\rm B\kern-.05em{\sc i\kern-.025em b}\kern-.08em
    T\kern-.1667em\lower.7ex\hbox{E}\kern-.125emX}}
\setlength {\marginparwidth }{2cm}
\def\K{\mathcal{K}}
\def\L{\mathcal{L}}

\def\Ksum{\sum\limits_{k \in \mathcal{K}}}
\def\Lsum{\sum\limits_{l \in \mathcal{L}}}
\def\isum{\sum\limits_{i \in \mathcal{K}}}
\def\jsum{\sum\limits_{j \in \mathcal{L}}}
\def\hs{{\boldsymbol{h}_{k,r}^{s}}^H}
\def\hsss{{\boldsymbol{h}_{k,r}^{s}}}
\def\gs{\boldsymbol{g}_{l,r}^{s}}
\def\pr{\mathcal{P}}
\def\uk{u_{1,k}}
\def\ul{\boldsymbol{u}_l}
\def\h{\boldsymbol{h}}
\def\g{\boldsymbol{g}}
\def\flk{f_{l,k}}
\def\hhat{\hat{\boldsymbol{h}}}
\def\Hhat{\hat{\boldsymbol{H}}}
\def\ghat{\hat{\boldsymbol{g}}}
\def\TH{\hat{\boldsymbol{\Theta}}}
\def\hbH{\Bar{\boldsymbol{h}}^H}
\def\hb{\Bar{\boldsymbol{h}}}
\def\fb{\Bar{{f}}_{l,k}}
\def\gb{\Bar{\boldsymbol{g}}}
\def\w{\boldsymbol{w}}
\def\NDL{\sigma_{\text{DL}}^2}
\def\NUL{\sigma_{\text{UL}}^2}
\def\Nhat{\hat{\sigma}^2}
\def\kp{k^{\prime}}
\def\lp{l^{\prime}}
\def\Tw{\boldsymbol{T}}
\def\Htilda{\Tilde{\boldsymbol{H}}_k}
\def\lam{\boldsymbol{\Lambda}}
\def\lp{\lambda_l}
\def\bphi{\boldsymbol{\Phi}}
\def\THPhi{\hat{\boldsymbol{\Theta}}(\boldsymbol{\Phi})}
\def\THPhiGDA{\hat{\boldsymbol{\Theta}}(\boldsymbol{\Phi}^{(s)})}
\def\vHphi{\boldsymbol{v}^H(\boldsymbol{\Phi})}
\def\vphi{\boldsymbol{v}(\boldsymbol{\Phi})}
\def\aki{\boldsymbol{a}_{k,i}}
\def\blk{\boldsymbol{b}_{l,k}}
\def\cki{c_{k,i}}
\def\fonelk{f1_{l,k}}

\def\RSI{\text{RSI}(\boldsymbol{u}_l)}
\def\zlj{\boldsymbol{z}_{l,j}}
\def\dlj{d_{l,j}}
\def\dtil{\Tilde{\boldsymbol{d}}_u}
\def\etil{\Tilde{\boldsymbol{e}}}

\def\ytil{\Tilde{\boldsymbol{y}}}

\def\ztilnt{[\Tilde{\boldsymbol{z}}]_{nt}}
\def\ztilut{[\Tilde{\boldsymbol{z}}]_{ut}}

\def\stil{\Tilde{\boldsymbol{s}}}

\def\mtilnt{[\Tilde{\boldsymbol{m}}]_{nt}}
\def\mtilut{[\Tilde{\boldsymbol{m}}]_{ut}}

\def\xtilnt{[\Tilde{\boldsymbol{x}}]_{nt}}
\def\xtilut{[\Tilde{\boldsymbol{x}}]_{ut}}

\def\atilnt{[\Tilde{\boldsymbol{a}}]_{nt}}
\def\atilut{[\Tilde{\boldsymbol{a}}]_{ut}}

\def\btilnt{[\Tilde{\boldsymbol{b}}]_{nt}}
\def\btilut{[\Tilde{\boldsymbol{b}}]_{ut}}

\def\extmT{e^{j\phi_t}}

\def\exupT{e^{-j\phi_u}}

\def\exnpT{e^{-j\phi_n}}

\def\extm{e^{-j\phi_t}}

\def\exup{e^{j\phi_u}}

\def\exnp{e^{j\phi_n}}
\newcommand{\normm}[1]{\left\lVert#1\right\rVert}
\def\B{\mathcal{{B}}}
\def\C{\mathcal{{C}}}
\def\Q{\mathcal{{Q}}}
\def\T{\mathcal{{T}}}
\def\Bt{\Tilde{\mathcal{{B}}}}

\def\betkDL{\beta_k^{\text{DL}}}

\def\betkpDL{\beta_{k^{\prime}}^{\text{DL}}}
\def\betlUL{\beta_l^{\text{UL}}}

\def\betlpUL{\beta_{l^{\prime}}^{\text{UL}}}
\def\keciUEDL{\xi_{\text{UE}}^{\text{DL}}}
\def\keciUEUL{\xi_{\text{UE}}^{\text{UL}}}
\def\kecitUL{\xi_{\text{BS}}^{\text{UL}}}
\def\kecitDL{\xi_{\text{BS}}^{\text{DL}}}
\def\zDL{\boldsymbol{z}_{\text{BS}}^{\text{DL}}}
\def\zuserUL{z_{\text{UE}}^{\text{UL}}}
\def\zuserDL{{z_{{\text{UE}}}^{\text{DL}}}_k}
\def\zUL{\boldsymbol{z}_{\text{BS}}^{\text{UL}}}

\newcommand{\abss}[1]{{\lvert{#1}\rvert}^2}

\newtheorem{lemma}{Lemma}
\begin{document}

\title{Weighted Sum-Rate Maximization for Multi-IRS-assisted Full-Duplex Systems with Hardware Impairments}


    
 \author{\IEEEauthorblockN{\normalsize
    Mohammad Amin Saeidi, Mohammad Javad Emadi, Hamed Masoumi,  Mohammad Robat Mili, Derrick Wing Kwan Ng, \textcolor{black}{\textit{Senior Member, IEEE,}} and Ioannis Krikidis}, \textcolor{black}{\textit{Fellow, IEEE}}
 \thanks{M. A. Saeidi, M. J. Emadi and H. Masoumi are with the Electrical Engineering Department, Amirkabir University of Technology, Tehran, Iran. (E-mails: \{amin.saeidi, mj.emadi,hamed\_masoomy\}@aut.ac.ir). D. W. K. Ng is with the School of Electrical Engineering and Telecommunications, the University of New South Wales, Australia. (E-mail: w.k.ng@unsw.edu.au). M. Robat Mili and I. Krikidis are with the Department of Electrical and Computer Engineering, University of Cyprus, 1678 Nicosia, Cyprus (E-mails: Mohammad.Robatmili@ieee.org,  krikidis@ucy.ac.cy).}}
 
\maketitle

\begin{abstract}
Smart and reconfigurable wireless communication environments can be established by exploiting well-designed intelligent reflecting surfaces (IRSs) to shape the communication channels. In this paper, we investigate how multiple IRSs affect the performance of multi-user full-duplex communication systems under hardware impairment at each node, wherein the base station (BS) and the uplink users are subject to maximum transmission power constraints. Firstly, the uplink-downlink system weighted sum-rate (SWSR) is derived which serves as a system performance metric. Then, we formulate the resource allocation design for the maximization of SWSR as an optimization problem which jointly optimizes the beamforming and the combining vectors at the BS, the transmit powers of the uplink users, and the phase shifts of multiple IRSs. Since the SWSR optimization problem is non-convex, an efficient iterative alternating approach is proposed to obtain a suboptimal solution for the design problem considered and its complexity is also discussed. In particular, we firstly reformulate the main problem into an equivalent weighted minimum mean-square-error form and then transform it into several convex sub-problems which can be analytically solved for given phase shifts. Then, the IRSs phases are optimized via a gradient ascent-based algorithm. Finally, numerical results are presented to clarify how multiple IRSs enhance the performance metric under hardware impairment.
\end{abstract}
\begin{IEEEkeywords}
Multiple intelligent reflecting surface, full-duplex, system weighted sum-rate maximization, hardware impairment.
\end{IEEEkeywords}

\section{Introduction}
To meet the required demands and to support the potential use cases in the fifth and sixth generations of wireless networks, e.g. the Internet-of-everything and the tactile internet, key enabling wireless technologies, in particular, massive multiple-input multiple-output (mMIMO), cell-free mMIMO, ultra-dense and device-to-device networks, higher frequency (millimeter-wave, terahertz) communications, drone-based communications, and the integration of terrestrial and satellite wireless networks has been proposed to enrich the network and support various use cases 
\cite{cellfreeMarzetta,KwanCellFree,rajatheva2020white,masoumi2019performance,ngo2017cell, boccardi2014five, masoumi2020cell}. Besides, the intelligent reflecting surface (IRS) is recently proposed to not only customize the propagation environment in wireless channels, but also it can be adopted as a complement solution to reduce the deployment costs of active antennas \textcolor{black}{used in conventional MIMO setups} \cite{di2019smart}. In practice, an IRS is fabricated as a thin metasurface composed of reflecting and phase-controllable elements, where each of them can manipulate the phase of the incident signal so as to shape the channel conditions \cite{rajatheva2020white}. By configuring the phase shifts introduced by the IRS, one can control the direction of the reflected signal towards a desired direction to enhance the received  signal-to-interference-plus-noise ratio (SINR) at some users \cite{OFDMmeets2020} or to improve the secrecy rate by \textcolor{black}{covering} the specific signals \cite{lu2020intelligentCovert,9163399,KwanSecure}. \textcolor{black}{Besides, the low-cost IRSs can be easily deployed on walls, buildings facades, road signs, etc., which makes it useful for various applications. e.g., smart-cities, homes, airports, and intelligent cars \cite{di2020smartSurvey}.}

Since the IRSs are deployed mainly with \textit{passive} elements, additive thermal noises and self-interference are generally negligible and imposing virtually no impact on the signal. Technically, by increasing the number of IRS elements, it has been shown that as IRS-based scheme can outperform the conventional \textit{active}  amplify/decode-and-forward relaying in both transferring energy and information \cite{emadiRelayVsIRS}, and provides higher energy-efficiency than the relay-assisted \textcolor{black}{systems} \cite{huang2019reconfigurable}. Moreover, as an emerging hardware technology, IRS is capable of providing a quadratic array gain compared to that of a linear array gain achieved by the conventional multi-antenna techniques \cite{bjornson2020power}.
Thus, IRS-aided communication is an interesting technique which serves as viable energy- and spectral-efficient solution to realize wireless communications and has a wide potential applications in emerging networks to improve performance of the system, e.g. IRS-assisted cell-free networks \cite{IRS-CF}, IRS-aided unmanned aerial vehicle communications \cite{ge2020joint,KwanUAV},  two-way IRS-assisted communications \cite{IRS-TwoWay},  employment of the IRS in wireless power transfer \cite{MIMOpowerIRS}, and integrating backscatter link with IRS \cite{zhao2020backscatter} are studied in the literature.

Due to the numerous potential practical applications of IRS deployments, IRS-assisted wireless communication systems have received increasing attention from academia \textcolor{black}{to investigate its fundamental limitations as well as enabling practical design.} Specifically, in  \cite{Multipoint}, an IRS-aided multicell wireless network was considered wherein joint processing coordinated multipoint transmission from multiple base stations (BSs) was conducted by exploiting an optimized IRS. In \cite{RuiZhang-Network}, single- and multi-user multiple-input single-output (MISO) IRS-aided systems were studied, while joint active and passive beamforming problem were designed to minimize the total transmit power at the BS by using the semidefinite relaxation \cite{boyd2004convex} and alternating optimization techniques. Besides, a downlink multi-group multicast communication system supported by an IRS was considered in \cite{Multicast}, and the sum-rate of all the multicast groups was maximized by optimizing the precoding matrix at the BS and the phase shifts at the IRS. Also, the authors in \cite{krikidisIRS2019lowcomplexity} proposed  low-complexity and energy-efficient schemes adopting a random phase  rotation at each element of the IRS to overcome high propagation losses drawback of the IRS-assisted communications.  Similar to the system model of  \cite{RuiZhang-Network}, the authors in \cite{zappone2020resource}  assumed channel-matched beamforming to maximize geometric mean of downlink SINR for all the users. Also, the gradient-based method was used for optimizing the reflection coefficients of the IRS.


Despite the fruitful results in the literature, the performance of wireless communication systems is mainly limited as the uplink and downlink are always separated orthogonally which underutilize the system resources. As a remedy, the full-duplex (FD) communications have been proposed which can almost double the spectral efficiency compared to the traditional half-duplex (HD) technology. In particular, FD transceivers are allowed to transmit in downlink and uplink simultaneously in the same frequency band at the cost of introducing strong self-interference \cite{FD1, KrikidisFD2018heterogeneous,FD_Secrecy}. To enable effective FD communication, resource allocations for a \textit{single} IRS-assisted cognitive networks was designed in \cite{schober2020resource} to maximize the sum rates of secondary network while controlling interference leakage on the primary users. \textcolor{black}{Hence, in our work, to investigate the impact of employing multiple IRSs on the FD communication systems, we propose to adopt multiple IRSs}. On the other hand, it is well-known that the performance of the communication systems can be heavily degraded if the hardware devices are not perfect due to the phase noise, sampling frequency offset, in-phase/quadrature-phase imbalance, quantization errors, non-linearity effect, etc. \textcolor{black}{In particular, if advanced complex signal processing and expensive high-quality devices are in use, non-negligible residual hardware impairment (HI) remains after calibration\cite{HI}.} Thus, fundamental performance of communication systems in the presence of HI has been analyzed in the literature from different perspectives \cite{HI_sec,HI_mMIMO, masoumi2019performance}. It is worth mentioning that the hardware impairment \textcolor{black}{alongside with employing multiple IRSs operating in a FD communication system} was \textit{not} considered in all the aforementioned works on IRS-assisted communications \textcolor{black}{\cite{IRS-CF,ge2020joint,KwanUAV,IRS-TwoWay,MIMOpowerIRS,zhao2020backscatter,Multipoint,RuiZhang-Network,Multicast,krikidisIRS2019lowcomplexity,zappone2020resource,schober2020resource}}, i.e. the transceivers of the legitimate users and the base station are equipped with perfect hardware components. Thus, analyzing  FD IRS-assisted systems with imperfect devices is of highly interest.

In this paper, to realize cost- and performance-efficient multi-user systems, a FD IRS-assisted system is studied to improve system performance, while assuming \emph{imperfect} transceivers to investigate how the FD IRS-assisted system behaves in the presence of HI. It is assumed that \emph{multiple} IRSs coexist in the network to cooperatively support the uplink (UL) and the downlink (DL) users while interacting with a multi-antenna BS. Since the UL and DL communications are performed in a FD manner, not only the signal of the UL users cause interference to the DL user, but also the BS is also subject to a non-negligible self-interference. Thus, we aim to design efficient resource allocation algorithms for maximizing the weighted system sum-rate (SWSR). In the following, the contributions of this paper are summarized.
\begin{itemize}
    \item To simultaneously support the UL and DL data transmissions, \textcolor{black}{a \textit{full-duplex} system is integrated with} \textit{multiple} IRSs to provide and enhance performance of the \textit{multi-user} communications between the UL users-to-BS and the DL users-to-BS pairs, respectively, in the presence of \textit{imperfect} transceivers. 
    
    \item The achievable rates of the uplink and the downlink users are derived, while the UL-DL weighted system sum-rate is formulated to be maximized. To maximize the considered performance metric,  we jointly optimize the beamforming vector for the downlink users subject to the maximum power constraint at the BS and the combining, i.e. data recovery, vector of the uplink users at the BS. Moreover, the UL power allocations are derived subject to the maximum power constraint at each UL user and the optimal phase shifts of IRSs' elements are derived. 
    
    \item Since the mentioned optimization problem is not jointly concave over the optimization parameters, \textcolor{black}{a suboptimal algorithm based on the iterative alternating optimization approach is designed}. Specifically, for a given IRSs' phase shift matrices, we reformulate the optimization problem into an equivalent weighted minimum mean-square-error (WMMSE) problem to obtain the DL beamformer, the UL combining vector, and the UL users' transmit powers, iteratively. Afterwards, for the given beamformer, the combiner and the power allocation solutions, we handle the challenging IRSs' phase shifts optimization problem via a gradient-based algorithm to obtain a suboptimal solution. 
    \item Our numerical results show that employing multiple IRSs can significantly enhance the SWSR performance compared with that of the conventional system without IRSs or fixed phase IRSs. Also, deploying multiple IRSs can effectively overcome the non-ideal hardware effects at both the users and the BS. Finally, it is shown that  deploying IRSs close to both of the UL and DL users, results in an evident improvement compared with the case that uses only a single IRS in the system.
\end{itemize}

\textit{Organization}:\ The rest of this paper is organized as follows. In Section \ref{sec:sysmodel}, the considered system model is introduced and the SWSR is derived. Section \ref{sec:Problem statement and analysis} formulates the problem and provides its analysis to determine the details of the proposed algorithm. Numerical results are discussed in Section \ref{sec:simulation}, and finally, Section \ref{sec:concolusion} concludes the paper.

\textit{Notations}:\ $\mathbb{C}^{M \times N}$ denotes the space of $M \times N$ complex valued matrices. $\mathbb{H}^{M}$ denotes the set of all complex Hermitian matrix with dimension $M$. For a square matrix $\boldsymbol{F}$, $\text{Tr}(\boldsymbol{F})$ denotes its trace and $\boldsymbol{F}\succeq \boldsymbol{0}$ denotes that $\boldsymbol{F}$ is positive semidefinite matrix. $\text{rank}(\boldsymbol{F})$ denotes the rank of $\boldsymbol{F}$. For complex-valued vector $\boldsymbol{x}$, $\lvert{\boldsymbol{x}\rvert}$ denotes its Euclidean norm. For complex-valued scalar $x$, $\Re(x)$ and $\Im(x)$, denote the real part and imaginary part of $x$, respectively. $x^*$ stand for the conjugate of $x$ and $\Vec{\boldsymbol{0}}$ denotes a zero vector. The matrix $\boldsymbol{I}_N$ represents a $N\times N$ identity matrix. For independent and identically distributed (i.i.d.) random variable (RV) ${s}$, ${s}\sim\mathcal{CN}\left(0,\sigma\right)$ denotes that the RV has complex Gaussian distribution with zero mean and variance $\sigma$.
\begin{figure}[!t]
   \centering
   \psfrag{bs}[][][0.85]{FD-BS}
   \psfrag{hself}[][][1]{\ \ $\boldsymbol{H}^{\text{SI}}$}
   \psfrag{Hr}[][][1]{\! \! \! \!$\boldsymbol{H}_r$}
   \psfrag{HR}[][][1]{\!\!\!\!\!\!$\boldsymbol{H}_R$}
   \psfrag{irsr}[][][0.85]{${\ \ \ \ \ \ \text{IRS}}_r$}
   \psfrag{irsR}[][][0.85]{${\text{IRS}}_R$}
   \psfrag{hk}[][][1]{\ \ $\h_k$}
   \psfrag{gl}[][][1]{\! \! $\g_l$}
   \psfrag{hkr}[][][1]{\ \ $\boldsymbol{h}_{k,r}^{s}$}
   \psfrag{hkR}[][][1]{$\boldsymbol{h}_{k,R}^{s}$}
   \psfrag{glr}[][][1]{ \ \ \ $\boldsymbol{g}_{l,r}^{s}$}
   \psfrag{glR}[][][1]{$\boldsymbol{g}_{l,R}^{s}$}
   \psfrag{flk}[][][1]{$\flk$}
   \psfrag{downlink users}[][][0.85]{Downlink Users}
   \psfrag{uplink users}[][][0.85]{Uplink Users}
\includegraphics[scale=0.4]{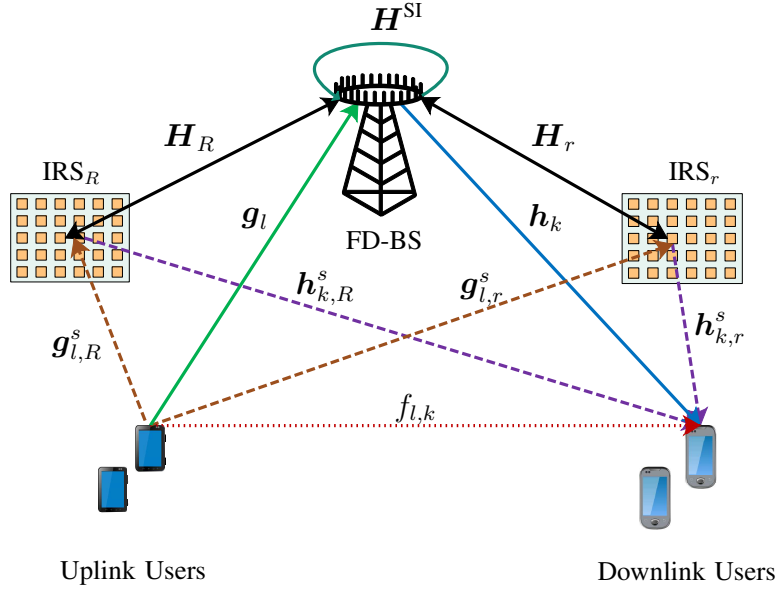}
\caption{A FD multi-IRS aided multi-user system.}
\label{fig:sysmodel}
\end{figure}
\section{System Model}
\label{sec:sysmodel}
As depicted in Fig. \ref{fig:sysmodel}, we consider a multi-IRS aided multi-user FD system consisting of one $N_t$-antenna BS, ${K}$ single-antenna downlink users, ${L}$ single-antenna uplink users, \textcolor{black}{and ${R}$ IRSs in which the number of elements for the $r$-th IRS is $M_r$.} We denote the sets of DL users, UL users, IRSs, and elements of the $r$-th IRS  as $\mathcal{K}=\{1,...,K\}$, $\mathcal{L}=\{1,...,L\}$, $\mathcal{R}=\{1,...,R\}$, and $\mathcal{M}_{r}=\{1,...,M_r\}$, respectively. Since the UL users operate at the same frequency as the DL users, the UL signals interfere with the DL users. Also, the IRSs reflect all the incident signals received simultaneously from BS and UL users. \textcolor{black}{Furthermore, by sending training pilots, the BS can estimate all the channel coefficients  \cite{ChannelEstimation}, and thus we assume that perfect channel state information (CSI) is available at the BS for resource allocation design.} In the following, we firstly introduce the hardware impairment model and then present the signal transmission and reception at different nodes. Subsequently, the UL and DL achievable rates and the SWSR are presented.
\subsection{Hardware Impairment Model}
In practice, the non-idealness of hardware introduces noisy distortions to the transmitted/received signal \cite{masoumi2019performance}. In general, this effect can be modeled by $x_{d}=\sqrt{\xi} x+z$, where the input signal to the non-ideal hardware is denoted by $x$, and $\xi \in[0,1]$ indicates the hardware quality factor. In the sequel, we use $\keciUEDL$, $\keciUEUL$, $\kecitDL$, $\kecitUL$ to represent the hardware quality factors of DL users, UL users, the BS transmitter, and the BS receiver, respectively. Also, the distortion is modeled by $z \sim \mathcal{C} \mathcal{N}\left(0,\left(1-\xi\right) \mathbb{E}\left\{|x|^{2}\right\}\right)$, which is independent from the input signal  $x$.
\subsection{Signal Transmissions and Receptions}
Signal model of each node is discussed in the following. The FD-BS transmits the super-imposed precoded signal $\boldsymbol{x}^{\text{DL}}=\sqrt{\kecitDL}(\sum\limits_{k\in\mathcal{K}}\boldsymbol{w}_k s_k)+\zDL$ to the $K$ DL users, where $s_{k}\sim\mathcal{CN}\left(0,1\right)$ and $\boldsymbol{w}_k \in \mathbb{C}^{N_t \times 1}$ denote the i.i.d. information symbol for the $k$-th DL user and the corresponding BS transmit beamforming, respectively, and $\zDL\sim\mathcal{CN}\left(0,\bar{\kecitDL}\Ksum\abss{\w_k} \boldsymbol{I}_{N_t}\right)$ denotes the distortion caused by hardware impairment at the BS transmitter where $\bar{\kecitDL}=(1-\kecitDL)$. The $l$-th UL user transmits  ${x}_{l}^{\text{UL}}=\sqrt{\keciUEUL}\sqrt{\rho_l}q_l + \zuserUL$, where $q_{l}\sim\mathcal{CN}\left(0,1\right)$ is i.i.d. information symbol and $\rho_l$ indicates the transmit power of the $l$-th UL user, and $\zuserUL\sim\mathcal{CN}\left(0,\bar{\keciUEUL}\rho_{l}\right)$ denotes the distortion caused by hardware impairment at the $l$-th UL user where $\bar{\keciUEUL}=(1-\keciUEUL)$. 

By neglecting multiple reflected signals from each IRS, and assuming that the delay among multiple paths introduced by the $R$ IRSs is negligible compared to the symbol duration, the $k$-th DL user receives the following signal
\begin{equation*}
     y_k^{\text{DL}}\!= \underbrace{\sqrt{\keciUEDL\kecitDL}\!\left(\!\boldsymbol{h}_k^H\! + \sum\limits_{r\in\mathcal{R}}\! \hs \boldsymbol{\Theta}_r \boldsymbol{H}_r\!\right)\! \w_k s_k}_{\text{Desired signal}}  \!+\underbrace{\sqrt{\keciUEDL\kecitDL}\!\left(\!\boldsymbol{h}_k^H\! + \sum\limits_{r\in\mathcal{R}}\! \hs \boldsymbol{\Theta}_r \boldsymbol{H}_r\!\right)\! \sum\limits_{i\neq k}^K\w_i s_i}_{\text{Multi-user interference}}
     \end{equation*}
     \begin{equation}\label{eqn: DL-Received}
     + \sqrt{\keciUEDL}\!\left(\!\boldsymbol{h}_k^H\! + \sum\limits_{r\in\mathcal{R}}\! \hs \boldsymbol{\Theta}_r \boldsymbol{H}_r\!\right)\!\zDL\!+ \underbrace{\sqrt{\keciUEDL}\sum\limits_{l\in\mathcal{L}}\! \left(\!{f}_{l,k}\! + \sum\limits_{r\in\mathcal{R}}\! \hs \boldsymbol{\Theta}_r \gs\!\right)\! {x}_{l}^{\text{UL}}}_{\text{UL interference signals and their reflections from IRSs}}\! +\zuserDL\!+ n^{\text{DL}},
\end{equation}
where $\boldsymbol{h}_k \in \mathbb{C}^{N_t \times 1}$, $\hsss \in \mathbb{C}^{M_{r} \times 1}$, and $\boldsymbol{H}_r \in \mathbb{C}^{M_r \times N_t}$ denote the channels between the BS and the $k$-th DL user, the channels between the $r$-th IRS and the $k$-th DL user, and the channel matrix between the BS and the $r$-th IRS, respectively, and ${f}_{l,k} \in \mathbb{C}$, $\gs\in \mathbb{C}^{M_{r} \times 1}$ represent the channels between the $l$-th UL user and the $k$-th DL user and the channels between the $l$-th UL user and the $r$-th IRS, respectively. Besides, diagonal matrix $\boldsymbol{\Theta}_r=\operatorname{diag}\left(e^{j \phi_{r,1}}, e^{j \phi_{r,2}}, \ldots, e^{j \phi_{r,M_r}}\right)$ expresses  the phase shift matrix of the $r$-th IRS while, $\phi_{r,i} \in [0,2\pi), \forall i\in\mathcal{M}_r$, is the phase shift applied to the incident signal via the $r$-th IRS, and $\zuserDL\sim\mathcal{CN}\left(0,\sigma_{\zuserDL}^2\right)$ denotes the distortion caused by hardware impairment at the $k$-th DL user. Moreover, the distortion variance at the $k$-th DL user is derived as
\begin{equation}\label{eqn:zuserDLVar}
    {\sigma_{\zuserDL}^2}=\bar{\keciUEDL}\left(\kecitDL\isum\abss{\hbH_k\w_i}+\bar{\kecitDL}\abss{\hbH_k}\isum\abss{\w_i}+\Lsum\abss{\fb}\rho_l\right),
\end{equation}
where $\hbH_k =\h_k^H+\hhat_k\TH\Hhat $, $\fb=\flk+\hhat_k\TH \ghat_l$, and $\hat{\boldsymbol{h}}_k\hat{\boldsymbol{\Theta}}\hat{\boldsymbol{H}} = \sum\limits_{r\in\mathcal{R}} \hs \boldsymbol{\Theta}_r \boldsymbol{H}_r$, $\hat{\boldsymbol{h}}_k\hat{\boldsymbol{\Theta}}\hat{\boldsymbol{g}}_l=\sum\limits_{r\in\mathcal{R}} \hs \boldsymbol{\Theta}_r \gs$, $\hat{\boldsymbol{h}}_k=\left[\boldsymbol{h}_{k,1}^s,\ldots,\boldsymbol{h}_{k,R}^s\right]^T$, $\hat{\boldsymbol{g}}_l=\left[\boldsymbol{g}_{l,1}^s,\ldots,\boldsymbol{g}_{l,R}^s\right]^T$, $\hat{\boldsymbol{H}}=\left[\boldsymbol{H}_1,\ldots,\boldsymbol{H}_R\right]^T$. The diagonal matrix $\hat{\boldsymbol{\Theta}} = \operatorname{diag}\left(\boldsymbol{\Theta}_1,...,\boldsymbol{\Theta}_R\right) \in \mathbb{H}^{M}$ is a block matrix such that its diagonal entries contain the phase shifts of the all $R$ IRSs and $M = M_1+\ldots+M_R$.
Also, $n^{\text{DL}}\sim\mathcal{CN}\left(0,\sigma_{\text{DL}}^2\right)$ models the circular symmetric complex additive white Gaussian noise (AWGN) at the DL users. 

The received signal at the BS is given by
\begin{equation}\label{eqn:BS Reception}
  \boldsymbol{y}^{\text{UL}}=\underbrace{\sqrt{\kecitUL}\sum\limits_{l\in\mathcal{L}}\left(\boldsymbol{g}_l + \sum\limits_{r\in\mathcal{R}} \boldsymbol{H}_{r}^H \boldsymbol{\Theta}_r \gs\right) {x}_{l}^{\text{UL}}}_{\text{Direct signals and their reflections from IRSs}} + \underbrace{\boldsymbol{H}^{\text{SI}}\boldsymbol{x}^{\text{DL}}}_{\text{Residual self interference}}+\zUL + \boldsymbol{n}^{\text{UL}}, 
\end{equation}
where $\boldsymbol{g}_l\in\mathbb{C}^{N_t\times1}$ is channel between the BS and the $l$-th UL user, and the term $\boldsymbol{H}^{\text{SI}}\boldsymbol{x}^{\text{DL}}$ indicates the residual self-interference (RSI) \cite{WMMSE}. Similar to \cite{SI-1}, we assume that $\boldsymbol{H}^{\text{SI}}$ is unknown at the BS and each element  has i.i.d. complex zero-mean Gaussian distribution with variance $\hat{\sigma}$, and $\boldsymbol{n}^{\text{UL}}\sim\mathcal{CN}\left(\boldsymbol{0},\sigma_{\text{UL}}^2\boldsymbol{I}_{N_t}\right)$ models AWGN at the BS. Also, $\zUL\sim\mathcal{CN}\left(0,\sigma_{\zUL}^2\boldsymbol{I}_{N_t}\right)$ denotes the distortion caused by hardware impairment at the receiver of the BS, and the distortion variance $\sigma_{\zUL}^2$ is derived as follows
\begin{equation}\label{eqn:zULVar}
    \sigma_{\zUL}^2=\bar{\kecitUL}\left(\jsum\abss{\gb_j}\rho_j+\hat{\sigma}^2\isum\abss{\w_i}(\kecitDL+\bar{\kecitDL}N_t)\right),
\end{equation}
where $\gb_l = \g_l + \Hhat^H \TH \ghat_l$ and $\bar{\kecitUL}=(1-\kecitUL)$.
\subsection{System Weighted Sum-Rate}
In the following, achievable rates of the DL and the UL are derived and the SWSR is presented. By using (\ref{eqn: DL-Received}), the achievable data rate in  \textcolor{black}{bits per channel use (bpcu)} of the $k$-th DL user becomes
\begin{equation}\label{eqn: DL rate}
    R_k^{\text{DL}}=\log_2(1+\gamma_k) ~~ \text{[bpcu]},
\end{equation}
where $\gamma_k$ is the DL SINR and is given by
\begin{equation}\label{eqn: DLSINR}
    \gamma_k=\frac{{\keciUEDL\kecitDL\lvert{\hbH_k\boldsymbol{w}_k}\rvert}^2 }{\kecitDL\sum\limits_{i\neq k}^K {\lvert{\hbH_k\boldsymbol{w}_i}\rvert}^2 +\bar{\keciUEDL}\kecitDL\abss{\hbH_k\w_k}+\bar{\kecitDL}\abss{\hbH_k}\isum\abss{\w_i}+ \sum\limits_{l\in\mathcal{L}}{\lvert{\fb}\rvert}^2 \rho_l + \sigma_{\text{DL}}^2}.
\end{equation}
 After receiving the signal (\ref{eqn:BS Reception}) at the BS, it applies the combining vector $\boldsymbol{u}_l \in \mathbb{C}^{N_t \times 1}$ to recover the data symbol of $l$-th UL user, that is  \textcolor{black}{$\hat{q}_l=\boldsymbol{u}_l^H \boldsymbol{y}^{\text{UL}}$.} Thus the achievable transmission rate of the $l$-th UL user becomes
\begin{equation}\label{eqn: UL rate}
    R_l^{\text{UL}}=\log_2(1+\gamma_l) ~~ \text{[bpcu]},
\end{equation}
where $\gamma_l$ is the UL SINR and is given by
\begin{equation}\label{eqn: ULSINR}
    \gamma_l=\frac{\keciUEUL\kecitUL{\lvert{\boldsymbol{u}_l^H\gb_l}\rvert}^2 \rho_l }{\kecitUL\sum\limits_{j\neq l}^K {\lvert{\boldsymbol{u}_l^H\gb_j}\rvert}^2 \rho_j +\bar{\keciUEUL}\kecitUL\abss{\ul^H\gb_l}\rho_l+\abss{\ul}\bar{\kecitUL}\jsum\abss{\gb_j}\rho_j+ \text{RSI}(\boldsymbol{u}_l) + \sigma_{\text{UL}}^2 {\lvert{\boldsymbol{u}_l} \rvert}^2}.
\end{equation}

Also, since $\boldsymbol{H}^{\text{SI}}$ is known to the BS, to simplify the effect of the residual self-interference, we use average RSI power similar to \cite{FDRSI}. Thus, the average RSI power at the BS for the $l$-th user is given by
\begin{equation}\label{eqn: RSI}
\begin{split}
    \text{RSI}(\boldsymbol{u}_l)&=\mathbb{E}\left\lbrace{\lvert{\boldsymbol{u}_l^H \boldsymbol{H}^{\text{SI}} \boldsymbol{x}^{\text{DL}} + \ul^H\zUL}\rvert}^2\right\rbrace \\ &={\hat{\sigma}}^2 {\lvert{\boldsymbol{u}_l} \rvert}^2  \sum\limits_{k\in\mathcal{K}}{\lvert \boldsymbol{w}_{k} \rvert}^2\left(\kecitUL+\kecitDL-\kecitUL\kecitDL+\bar{\kecitUL}\bar{\kecitDL}N_t\right).   
\end{split}
\end{equation}

Therefore, the SWSR is defined as
\begin{equation}\label{eqn: SSR}
    \text{{SWSR}}= \alpha_1 \sum\limits_{k\in \K} \beta_k R_k^{\text{DL}} + \alpha_2 \sum\limits_{l\in \L} \beta_l R_l^{\text{UL}},
\end{equation}
where $\betkDL\geq 0$ and $\betlUL\geq 0$ are constants which are introduced to control the priority of $k$-th DL user and $l$-th UL user, respectively, and $\alpha_1\geq 0$ and $\alpha_2\geq 0$ control weights of sum-rate at the DL and the UL, respectively. 
\section{Optimization Problem formulation and analysis}\label{sec:Problem statement and analysis}
To maximize the SWSR of the considered scenario, the following optimization problem is introduced
\begin{maxi!}[2]
	{\boldsymbol{w}_k,\boldsymbol{u}_l,\rho_l,\boldsymbol{\hat{\Theta}}}{\alpha_1 \sum\limits_{k\in \K} \betkDL R_k^{\text{DL}} + \alpha_2 \sum\limits_{l\in \L} \betlUL R_l^{\text{UL}}\label{eqn:p1}}
	{\label{problem:p1main}}{\mathcal{P}_1: \ \ }
	\addConstraint{\Ksum\ {\lvert{\boldsymbol{w}_k}\rvert}^2}{\leq P_{\text{max}}^{\text{BS}}\label{eqn:p1c1}}
	\addConstraint{\rho_l}{\leq P_{\text{max}}^l,\ \forall l\label{eqn:p1c2}}
	\addConstraint{0\leq \phi_{r,m}}{\leq 2\pi, \ \forall r,m,\label{eqn:p1c3}}
\end{maxi!}
$\!\! \! \! \text{where}$ (\ref{eqn:p1c1}) denotes the maximum power constraint at the BS with the maximum transmit power $P_{\text{max}}^{\text{BS}}$, (\ref{eqn:p1c2}) represents the maximum transmit power constraint of each UL user wherein  $P_{\text{max}}^l$ is the maximum transmit power at the $l$-th UL user, and (\ref{eqn:p1c3}) indicates the IRSs phase constraints. 

It is known that the optimization problem $\pr_1$ is non-convex and obtaining its globally optimal solution is challenging. As a compromise approach, we adopt an alternating optimization method which aims to achieve a suboptimal solution of the problem.
\textcolor{black}{Firstly, for a given phase shift matrices, the corresponding optimization problem is transformed into an equivalent WMMSE formulation which facilitate the development of an iterative method which converges to a stationary point of the corresponding objective function  with low computational complexity \cite{shi2011iteratively}. In the following, we decompose this equivalent optimization problem into a sequence of convex sub-problems, and the beamformer, the combining vector at the BS and the transmitted power of the UL users are optimized. Afterwards,  for the given solutions, we optimize the phase shift matrices via a gradient-based algorithm;} this process continues until the convergence. Finally, the complexity of the proposed algorithms is discussed.
\subsection{Equivalent WMMSE Optimization Problem for a Given $\boldsymbol{\hat{\Theta}}$}
For a given $\boldsymbol{\hat{\Theta}}$, by applying a similar WMMSE framework with the work in \cite{WMMSE-largeScale,masoumi2019performance}, the optimization problem $\pr_1$ is  transformed into the following equivalent WMMSE version
\begin{mini!}[2]
	{\substack{\boldsymbol{w}_k,\boldsymbol{u}_l,\uk\\ \rho_l,\mu_k^{\text{DL}},\mu_l^{\text{UL}}}}{\alpha_1 \Ksum \betkDL( \mu_k^{\text{DL}} e_k^{\text{DL}} - \ln{\mu_k^{\text{DL}}}) + \alpha_2 \Lsum \betlUL( \mu_l^{\text{UL}} e_l^{\text{UL}} - \ln{\mu_l^{\text{UL}}}) \label{eqn:p2}}
	{\label{problem:p2main}}{\mathcal{P}_2: \ \ }
	\addConstraint{\Ksum\ {\lvert{\boldsymbol{w}_k}\rvert}^2}{\leq P_{\text{max}}^{\text{BS}}\label{eqn:p2c1}}
	\addConstraint{\rho_l}{\leq P_{\text{max}}^l,\ \forall l,\label{eqn:p2c2}}
\end{mini!}
$\! \! \! \text{wherein}$ \textcolor{black}{$\mu_k^{\text{DL}}$ and $\mu_l^{\text{UL}}$ are weight factors for DL and UL, respectively.} Moreover, $e_k^{\text{DL}}$ and $e_l^{\text{UL}}$ are defined as
\begin{equation}\label{ek}
    \begin{split}
    e_k^{\text{DL}} = & \mathbb{E}\left\lbrace{\lvert{\hat{s}_k - s_k}\rvert^2}\right\rbrace =  \mathbb{E}\left\lbrace{\lvert{\uk y_k^{\text{DL}} - s_k}\rvert^2}\right\rbrace \\
    = & \lvert{\uk}\rvert^2 \left(\kecitDL\sum\limits_{i \in \K} \lvert{\hbH_k\w_i}\rvert^2 + \bar{\kecitDL} \abss{\hbH_k} \isum \abss{\w_i} + \Lsum\rho_l\lvert{\fb}\rvert^2 +\NDL \right) \\ -&2 \Re\left(\sqrt{\keciUEDL\kecitDL}\uk\hbH_k\w_k\right) + 1,
    \end{split}
\end{equation}
where $s_k$ is detected by the decoding coefficient $\uk \in \mathbb{C}$, i.e. $\hat{s}_k=\uk y_k^{\text{DL}}$, and
\begin{equation}\label{el}
    \begin{split}
        e_l^{\text{UL}} = & \mathbb{E}\left\lbrace{\lvert{\hat{q}_l - q_l}\rvert^2}\right\rbrace =  \mathbb{E}\left\lbrace{\lvert{\ul^H \boldsymbol{y}^{\text{UL}} - q_l}\rvert^2}\right\rbrace = \kecitUL\sum\limits_{j \in \L}\lvert{\ul^H\gb_j}\rvert^2\rho_j \\ +& \lvert{\ul}\rvert^2 \left( \bar{\kecitUL} \jsum \abss{\gb_j}\rho_j +  \Ksum\abss{\w_k}\Nhat\left(\kecitUL+\kecitDL-\kecitUL\kecitDL+\bar{\kecitUL}\bar{\kecitDL}N_t\right)+\NUL\right)
         \\ -&2 \Re\left(\sqrt{\keciUEUL\kecitUL}\ul^H\gb_l\sqrt{\rho_l}\right) + 1.
    \end{split}
\end{equation}
In the following, in order to derive the optimal values of $\{ \ul,\uk,\mu_k^{\text{DL}},\mu_l^{\text{UL}}\}$, $\w_k$ and $\rho_l$, we transform the problem $\pr_2$ into several sub-problems. \textcolor{black}{Although, $\pr_2$ is not a jointly convex problem,  for each of the variables $\boldsymbol{w}_k,\boldsymbol{u}_l,\uk, \rho_l,\mu_k^{\text{DL}}, \ \text{and} \  \mu_l^{\text{UL}}$, the problem is convex and the corresponding solution can be achieved. By exploiting this fact, we propose an alternating procedure to address the sub-problems of $\pr_2$ which is summarized in Algorithm 1 and is explained in the following.}
\subsubsection{Optimal Values of $\{\ul,\uk,\mu_k^{\text{DL}},\mu_l^{\text{UL}}\}$}
For a given set of ${\lbrace \ul,\w_k,\mu_k^{\text{DL}},\mu_l^{\text{UL}},\rho_l \rbrace}$, we first present the following optimization problem $\pr_{2.1}$ to find optimal value of $\uk$
\begin{mini}
	{\uk}{\alpha_1 \Ksum \betkDL( \mu_k^{\text{DL}} e_k^{\text{DL}} - \ln{\mu_k^{\text{DL}}}). \label{eqn:p2.1}}
	{}{\mathcal{P}_{2.1}: \ \ }
\end{mini}
\textcolor{black}{Since the objective function \eqref{eqn:p2.1} is a convex function of $\uk$}, by taking the first derivative of (\ref{eqn:p2.1}) with respect to $\uk$ and set it equal to zero, we have 
\begin{equation}\label{uk}
    \uk^{\text{opt}} = \frac{\sqrt{\keciUEDL\kecitDL}\w_k^H \hb_k}{\kecitDL\sum\limits_{i \in \K}\abss{\hbH_k \w_i} +\bar{\kecitDL} \abss{\hbH_k} \isum \abss{\w_i} + \Lsum\rho_l\abss{\fb} + \NDL}.
\end{equation}
Similarly, for a given set of ${\lbrace \w_k,\uk,\mu_k^{\text{DL}},\mu_l^{\text{UL}},\rho_l \rbrace}$, problem $\pr_2$ is simplified as 
\begin{mini}
	{\ul}{\alpha_2 \Lsum \betlUL( \mu_l^{\text{UL}} e_l^{\text{UL}} - \ln{\mu_l^{\text{UL}}}). \label{eqn:p2.2}}
	{}{\mathcal{P}_{2.2}: \ \ }
\end{mini}
Thus, by computing the first derivative of (\ref{eqn:p2.2}) respect to $\ul$ and set it equal to zero, the optimal value of the combining vector at the BS is derived as  
\begin{equation}\label{ul}
\begin{split}
    \ul^{\text{opt}} =& \Bigg(\kecitUL\sum\limits_{j \in L}\rho_j\gb_j\gb_j^H + \bar{\kecitUL} \jsum \abss{\gb_j}\rho_j \\ +&\left(\Ksum\abss{\w_k}\Nhat \left(\kecitUL+\kecitDL-\kecitUL\kecitDL+\bar{\kecitUL}\bar{\kecitDL}N_t\right)+\NUL\right)\boldsymbol{I}_{N_t} \Bigg)^{-1} \sqrt{\keciUEUL\kecitUL}\sqrt{\rho_l}\gb_l.
\end{split}
\end{equation}
Finally, to find the optimal values of ${e_k^{\text{DL}}}$ and ${e_l^{\text{UL}}}$, as the objective function \eqref{eqn:p2} is convex with respect to ${e_k^{\text{DL}}}$ and ${e_l^{\text{UL}}}$, by taking the first derivative of (\ref{eqn:p2}) with respect to these parameters separately and then set them equal to zero, we have
\begin{subequations}
\begin{align}
{\mu_k^{\text{DL}}}^{\text{opt}} = {e_k^{\text{DL}}}^{-1},\label{muk} \\
{\mu_l^{\text{UL}}}^{\text{opt}} = {e_l^{\text{UL}}}^{-1}.\label{mul}
\end{align}
\end{subequations}
\subsubsection{Optimizing the BS Beamforming Vector}\label{sec:wOPT}
For a given set of ${\lbrace \ul,\uk,\mu_k^{\text{DL}},\mu_l^{\text{UL}},p_l \rbrace}$, optimization problem $\mathcal{P}_2$ is rewritten as
\begin{mini!}[2]
	{\boldsymbol{w}_k}{\alpha_1 \Ksum \betkDL(\mu_k^{\text{DL}} e_k^{\text{DL}} - \ln{\mu_k^{\text{DL}}}) + \alpha_2 \Lsum \betlUL(\mu_l^{\text{UL}} e_l^{\text{UL}} - \ln{\mu_l^{\text{UL}}}) \label{eqn:p2.3}}
	{\label{probelm:p2.3main}}{\mathcal{P}_{2.3}: \ \ }
	\addConstraint{\Ksum\ {\lvert{\boldsymbol{w}_k}\rvert}^2}{\leq P_{\text{max}}^{\text{BS}},\label{eqn:p2.3c1}}
\end{mini!}
$\! \! \! \! \! \text{where}$ ${p}_l=\sqrt{\rho_l}$. Since the objective function (\ref{eqn:p2.3}) and the constraint (\ref{eqn:p2.3c1}) are convex, the problem $\pr_{2.3}$ can be solved by a standard solver such as CVX \cite{cvx}. Nevertheless, to obtain more system design insight, we solve the problem $\pr_{2.3}$ through the Lagrangian method. The Lagrangian function is given by
\begin{equation}\label{eqn:LagrangianForwk}
    \Tilde{\mathcal{L}}_w(\w_k,\lambda)=\alpha_1 \Ksum \betkDL\mu_k^{\text{DL}} e_k^{\text{DL}} + \alpha_2 \Lsum \betlUL\mu_l^{\text{UL}} e_l^{\text{UL}} +\lambda\left(\Ksum\ {\lvert{\boldsymbol{w}_k}\rvert}^2- P_{\text{max}}^{\text{BS}}\right),
\end{equation}
where $\lambda \geq 0$ is the corresponding Lagrangian multiplier for the constraint (\ref{eqn:p2.3c1}). Then, in order to derive the optimal stationary point of $\w_k$, we take the first derivative of $\Tilde{\mathcal{L}}(\w_k,\lambda)$ with respect to $\w_k$, and set $\frac{\partial\Tilde{\mathcal{L}}_w}{\partial\w_k}=0$. Thus, we have  
\begin{equation}\label{eqn: wk optimal with lambda}
\begin{split}
     \w_k(\lambda) &= \Bigg(\alpha_1\left(\kecitDL\sum\limits_{k^{\prime}\in \K}\betkpDL\mu_{\kp}\abss{u_{1,\kp}}\hb_{\kp}\hb_{\kp}^H + \bar{\kecitDL} \sum\limits_{k^{\prime}\in \K}\betkpDL\mu_{\kp}\abss{u_{1,\kp}} \abss{\hb_{\kp}}\boldsymbol{I}_{N_t} \right) \\ &+ \alpha_2  \betlUL\Nhat \left(\kecitUL+\kecitDL-\kecitUL\kecitDL+\bar{\kecitUL}\bar{\kecitDL}N_t\right) \Lsum\mu_l^{\text{UL}}\abss{\ul}\boldsymbol{I}_{N_t} + \lambda\boldsymbol{I}_{N_t} \Bigg)^{-1} \\ & \ \ \ \ \ \ \ \ \ \ \ \ \ \ \ \ \ \   \times \alpha_1 \sqrt{\keciUEDL\kecitDL} \betkDL\mu_k^{\text{DL}}  \uk^* \hb_k.
\end{split}
\end{equation}
To obtain $\w_k$, one needs to determine optimal value of $\lambda$, as well. Due to the complementary slackness condition for the constraint (\ref{eqn:p2.3c1}) \cite{boyd2004convex}, we have
\begin{equation}\label{eqn: slackness 1}
    \lambda\left(\Ksum{\lvert{\boldsymbol{w}_k}\rvert}^2- P_{\text{max}}^{\text{BS}}\right) = 0.
\end{equation}
In the following, we present Lemma~\ref{lemma1} to derive optimal values of $\lambda$ and $\boldsymbol{w}_k$.
\begin{lemma}\label{lemma1}
$J(\lambda)=\Ksum\abss{\w_k(\lambda)}$ is a monotonically decreasing function of $\lambda$.
\end{lemma}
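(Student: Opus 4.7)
The plan is to observe that, for any fixed $\lambda\geq 0$, the closed-form expression for $\w_k(\lambda)$ in~(\ref{eqn: wk optimal with lambda}) has the compact form $\w_k(\lambda) = (\boldsymbol{A}+\lambda\boldsymbol{I}_{N_t})^{-1}\boldsymbol{c}_k$, where $\boldsymbol{A}\in\mathbb{H}^{N_t}$ gathers all the $k$-independent terms inside the inverted bracket and $\boldsymbol{c}_k = \alpha_1\sqrt{\keciUEDL\kecitDL}\,\betkDL \mu_k^{\text{DL}}\uk^{*}\hb_k$ is the $k$-dependent right factor. The key structural observation is that $\boldsymbol{A}$ is the \emph{same} matrix for every $k$, so the entire $\lambda$-dependence of $J(\lambda)$ is channeled through a single resolvent $(\boldsymbol{A}+\lambda\boldsymbol{I}_{N_t})^{-1}$.

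First I would verify that $\boldsymbol{A}\succeq\boldsymbol{0}$. Each summand in $\boldsymbol{A}$ is either a nonnegative multiple of a rank-one Hermitian positive semidefinite matrix $\hb_{\kp}\hb_{\kp}^{H}$, or a nonnegative multiple of $\boldsymbol{I}_{N_t}$; in particular the scalar $\kecitUL+\kecitDL-\kecitUL\kecitDL+\bar{\kecitUL}\bar{\kecitDL}N_t$ is manifestly nonnegative because $\kecitUL,\kecitDL\in[0,1]$ and $N_t\geq 1$. Hence $\boldsymbol{A}$ is a sum of Hermitian positive semidefinite matrices and its eigenvalues $\eta_1,\ldots,\eta_{N_t}$ are all nonnegative.

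Next I would exploit the spectral decomposition $\boldsymbol{A}=\boldsymbol{U}\boldsymbol{\Lambda}\boldsymbol{U}^{H}$ with $\boldsymbol{\Lambda}=\operatorname{diag}(\eta_1,\ldots,\eta_{N_t})$. Then $(\boldsymbol{A}+\lambda\boldsymbol{I}_{N_t})^{-2} = \boldsymbol{U}\operatorname{diag}\!\left(1/(\eta_i+\lambda)^2\right)\boldsymbol{U}^{H}$, and setting $\Tilde{\boldsymbol{c}}_k=\boldsymbol{U}^{H}\boldsymbol{c}_k$ yields
\begin{equation*}
\abss{\w_k(\lambda)} = \boldsymbol{c}_k^{H}(\boldsymbol{A}+\lambda\boldsymbol{I}_{N_t})^{-2}\boldsymbol{c}_k = \sum_{i=1}^{N_t}\frac{\abss{[\Tilde{\boldsymbol{c}}_k]_i}}{(\eta_i+\lambda)^2}.
\end{equation*}
Summing over $k$ produces
\begin{equation*}
J(\lambda) = \Ksum\abss{\w_k(\lambda)} = \sum_{i=1}^{N_t}\frac{\Ksum\abss{[\Tilde{\boldsymbol{c}}_k]_i}}{(\eta_i+\lambda)^2},
\end{equation*}
a nonnegative weighted sum of scalar functions of the form $1/(\eta_i+\lambda)^2$ with $\eta_i\geq 0$.

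Finally, termwise differentiation gives
\begin{equation*}
J'(\lambda) = -2\sum_{i=1}^{N_t}\frac{\Ksum\abss{[\Tilde{\boldsymbol{c}}_k]_i}}{(\eta_i+\lambda)^3} \leq 0,
\end{equation*}
with strict inequality whenever at least one $\boldsymbol{c}_k\neq\Vec{\boldsymbol{0}}$, which is the non-degenerate case of interest. This establishes the claimed strict monotone decrease of $J(\lambda)$ on $[0,\infty)$. The only subtle point in the argument is confirming positive semidefiniteness of $\boldsymbol{A}$ and noticing that $\boldsymbol{A}$ does not depend on $k$; once those are in hand, the remainder is a routine eigenvalue manipulation, and no auxiliary optimization or convexity machinery is needed.
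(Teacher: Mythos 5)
Your proof is correct and follows essentially the same route as the paper: both define the Hermitian positive semidefinite matrix $\boldsymbol{A}$, eigendecompose it, and express $J(\lambda)$ as a nonnegative combination of terms of the form $1/(\eta_i+\lambda)^2$, from which the monotone decrease is immediate. Your version is marginally more careful in that you explicitly justify $\boldsymbol{A}\succeq\boldsymbol{0}$ and carry out the termwise differentiation rather than merely asserting the decrease, but the underlying argument is identical.
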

\begin{proof}
Let us define matrix $\boldsymbol{A} \in \mathbb{H}^{N_t}$ as follows
\begin{equation}\label{eqn:A}
\begin{split}
    \boldsymbol{A} =& \alpha_1\left(\kecitDL\sum\limits_{k^{\prime}\in \K}\betkpDL\mu_{\kp}\abss{u_{1,\kp}}\hb_{\kp}\hb_{\kp}^H +\bar{\kecitDL} \sum\limits_{k^{\prime}\in \K}\betkpDL\mu_{\kp}\abss{u_{1,\kp}} \abss{\hb_{\kp}}\boldsymbol{I}_{N_t}\right) \\ + & \alpha_2  \betlUL\Nhat\left(\kecitUL+\kecitDL-\kecitUL\kecitDL+\bar{\kecitUL}\bar{\kecitDL}N_t\right)\Lsum\mu_l^{\text{UL}}\abss{\ul}\boldsymbol{I}_{N_t} , 
\end{split}
\end{equation}
where $\boldsymbol{A} \succeq \boldsymbol{0}$, and assuming that its rank is $N_{\tau}$ such that $N_{\tau} \leq N_t$. Thus, the eigenvalue decomposition of $\boldsymbol{A}$ becomes
\begin{equation}\label{eqn: eigenDecompos of A}
    \boldsymbol{A} = [\Tw_1 \ \Tw_2]\operatorname{diag}(\lam_1 \ \lam_2) [\Tw_1 \ \Tw_2]^H,
\end{equation}
wherein the first $N_{\tau}$ eigenvectors corresponding to the $N_{\tau}$ strictly positive eigenvalues are denoted by $\Tw_1$, $\lam_2=\Vec{\boldsymbol{0}}$, and $\lam_1$ is a diagonal matrix of $N_{\tau}$ strictly positive eigenvalues of $\boldsymbol{A}$. Hence, we can write the matrix $\boldsymbol{A}$ as
\begin{equation}\label{eqn: Decomposition of A}
    \boldsymbol{A} = \Tw_1 \lam_1 \Tw_1^H .
\end{equation}
Now, by using (\ref{eqn: Decomposition of A}), 
 the constraint (\ref{eqn:p2.3c1}) is reformulated as
\begin{equation}\label{eqn: J(lambda)}
\begin{split}
    J(\lambda) =& \Ksum \abss{\w_k}=\Ksum\w_k^H\w_k \\ = &\Ksum\keciUEDL\kecitDL\abss{\alpha_1}\abss{\mu_k^{\text{DL}}}\abss{\uk} \text{Tr}\left(\Tw_1(\lam_1 + \lambda\boldsymbol{I}_{N_t} )^{-1}\Tw_1^H\hb_k\hb_k^H\Tw_1(\lam_1 + \lambda\boldsymbol{I}_{N_t} )^{-1}\Tw_1^H\right) \\
    = & \Ksum \keciUEDL\kecitDL \abss{\alpha_1}\abss{\mu_k^{\text{DL}}}\abss{\uk} \text{Tr}\left((\lam_1 + \lambda\boldsymbol{I}_{N_t} )^{-2} \Htilda\right) \\
    = & \Ksum \keciUEDL\kecitDL \abss{\alpha_1}\abss{\mu_k^{\text{DL}}}\abss{\uk}\sum\limits_{i=1}^{N_{\tau}} \frac{[\Htilda]_{i,i}}{(y_i + \lambda)^2},
\end{split}
\end{equation}
where $\Htilda=\Tw_1^H\hb_k\hb_k^H\Tw_1$. Further, $[\Htilda ]_{i,i}$ and $y_i$ denote the $i$-th diagonal entry of $\Htilda$ and $\lam_1$, respectively. It is observed that the function $J(\lambda)$ is monotonically decreasing function of $\lambda$. 
\end{proof}
According to Lemma~\ref{lemma1}, if $J(0) \leq P_{\text{max}}^{\text{BS}}$ then $\w_k^{\text{opt}}=\w_k(0)$ for all $k$, otherwise  $J(\lambda^{\text{opt}})=P_{\text{max}}^{\text{BS}}$ must be solved to find $\lambda^{\text{opt}}$.

Based on the monotonic characteristic of $J(\lambda)$, the optimal dual variable $\lambda^{\text{opt}}$ can be found by using the bi-section search method. Moreover, to shrink the search space of the bi-section method, we adopt the following upper bound 
\begin{equation}\label{eqn:lambda Max 1}
    J(\lambda) \leq \Ksum \keciUEDL\kecitDL \abss{\alpha_1}\abss{\mu_k^{\text{DL}}}\abss{\uk}\sum\limits_{i=1}^{N_{\tau}} \frac{[\Htilda]_{i,i}}{(\lambda_{\text{max}})^2} \overset{\Delta}{=} P_{\text{max}}^{\text{BS}}.
\end{equation}
Therefore, the upper bound of $\lambda$ is given by
\begin{equation}\label{eqn:lambda Max 2}
    \lambda_{\text{max}} = \sqrt{\frac{\Ksum \keciUEDL\kecitDL \abss{\alpha_1}\abss{\mu_k^{\text{DL}}}\abss{\uk}\sum\limits_{i=1}^{N_{\tau}} [\Htilda]_{i,i}}{P_{\text{max}}^{\text{BS}}}}.
\end{equation}
As a result, the optimal downlink beamformer vector is computed as
\begin{equation}\label{eqn:wOpt}
\w_k^{\text{opt}}=\left\{\begin{array}{ll}
\w_k(0), &  J(0) \leq P_{\text{max}}^{\text{BS}} \\
\left(\boldsymbol{A} + \lambda^{\text{opt}}\boldsymbol{I}_{N_t} \right)^{-1} \alpha_1 \sqrt{\keciUEDL\kecitDL}\ \betkDL\mu_k^{\text{DL}} \uk^* \hb_k, & \text { o/w }.
\end{array}\right.
\end{equation}
\textcolor{black}{According to Lemma~1, when $\lambda=0$, the function $J(\lambda)$ achieves its maximum value. Depends on the values of $\alpha_1, \alpha_2, \betkDL, \betlUL$ which are given arbitrary parameters, and the maximum transmit power at the UL users, the $J(0) $ is  less than or equal  $P_{\text{max}}^{\text{BS}}$. So, the BS transmit power equals to $J(0)$. On the other hand, \textcolor{black}{depends on the priority of the DL or UL sum-rat and the transmit power at the UL users, if the maximum value of the function $J(\lambda)$, i.e. $J(0)$,} is higher than BS maximum transmit power, since $J(\lambda)$ is a decreasing function of $\lambda$, by computing the $\lambda^{\text{opt}}$, the value of $J(\lambda^{\text{opt}})$ reduces and equals $P_{\text{max}}^{\text{BS}}$ to meet the maximum power transmit constraint at the BS.} \textcolor{black}{Intuitively, according to the priority of the UL/DL in SWSR, we can increase the maximum transmit power of the BS or that of the UL users. For instance, if the priority of the UL sum rates is higher than the DL sum rates, the transmit power at the BS cannot reaches its maximum value because it  degrades the SWSR. Conversely, if the DL sum rates is more important that that of the UL one, by raising the value of the transmit power of the UL users, the BS needs to increase its power to combat the impact of the stronger interference due to the UL users.}
\subsubsection{Optimal Power Transmission at Uplink}
For a given set of ${\lbrace \ul,\uk,\mu_k^{\text{DL}},\mu_l^{\text{UL}},\w_k \rbrace}$, to derive optimal transmission power of users at uplink, we have the following optimization problem $\pr_{2.4}$.
\begin{mini!}[2]
	{p_l}{\alpha_1 \Ksum \betkDL(\mu_k^{\text{DL}} e_k^{\text{DL}} - \ln{\mu_k^{\text{DL}}}) + \alpha_2 \Lsum \betlUL(\mu_l^{\text{UL}} e_l^{\text{UL}} - \ln{\mu_l^{\text{UL}}}) \label{eqn:p2.4}}
	{\label{problem:p2.4main}}{\mathcal{P}_{2.4}: \ \ }
	\addConstraint{p_l^2}{\leq P_{\text{max}}^l,\forall l.\label{eqn:p2.4c1}}
\end{mini!}
Similar to the optimization problem $\pr_{2.3}$, problem $\pr_{2.4}$ can be solved by standard convex problem solvers or by the Lagrangian method. Hence, the Lagrangian function of $\pr_{2.4}$ is given by
\begin{equation}\label{eqn:LagrangianFor pl}
    \Tilde{\mathcal{L}}_p(p_l,\lp)=\alpha_1 \Ksum \betkDL\mu_k^{\text{DL}} e_k^{\text{DL}} + \alpha_2 \Lsum \betlUL \mu_l^{\text{UL}} e_l^{\text{UL}} +\Lsum  \lp\left(p_l^2-P_{\text{max}}^l\right),
\end{equation}
where $\lp \geq 0,$ for all $l$ are the corresponding Lagrangian multipliers for the constraints given in (\ref{eqn:p2.4c1}). Now by setting the first derivative of (\ref{eqn:LagrangianFor pl}) to zero, i.e. $\frac{\partial\Tilde{\mathcal{L}}_p}{\partial p_l}=0$, the transmit power of UL users is obtained as
\begin{equation}\label{eqn:p opt}
    p_l = \frac{\alpha_2 \sqrt{\keciUEUL\kecitUL} \betlUL\mu_l^{\text{UL}}\Re(\ul^H\gb_l)}{\alpha_1 \Ksum \betkDL\abss{\fb}\mu_k^{\text{DL}}\abss{\uk} + \alpha_2 \kecitUL\sum\limits_{l^{\prime} \in \L} \betlpUL \mu_{l^{\prime}} \abss{\gb_{l}^H \boldsymbol{u}_{l^{\prime}}}+ \bar{\kecitUL}\abss{\gb_l}\jsum\mu_{j^{\prime}} \abss{\boldsymbol{u}_j} + \lp}.
\end{equation}
On the other hand, the Lagrangian multiplier $\lp$ must satisfy the following complementary slackness condition
\begin{equation}\label{eqn:complemmentary s for p}
    \lp\left(p_l^2-P_{\text{max}}^l\right)=0, \forall l.
\end{equation}
Therefore, the optimal value of $p_l$ becomes
\begin{equation}\label{eqn:optimal pl}
   p_l^{\text{opt}} \!=\! \text{min}\!\left(\!\!\frac{\alpha_2 \sqrt{\keciUEUL\kecitUL} \betlUL\mu_l^{\text{UL}}\Re(\ul^H\gb_l)}{\alpha_1 \! \Ksum \betkDL\abss{\fb}\mu_k^{\text{DL}}\abss{\uk} \!+\! \alpha_2 \kecitUL\sum\limits_{l^{\prime} \in \L} \betlpUL \mu_{l^{\prime}} \abss{\gb_{l}^H \boldsymbol{u}_{l^{\prime}}}\!+\! \bar{\kecitUL}\abss{\gb_l}\jsum\mu_{j^{\prime}} \abss{\boldsymbol{u}_j}},\sqrt{P_{\text{max}}^l}\!\!\right)\!.
\end{equation}

According to the above analysis, the  alternating procedure to solve the sub-problems of $\pr_2$ are summarized in Algorithm 1.

\begin{algorithm}[t]
\small
    \caption{Iterative Algorithm to Solve $\mathcal{P}_2$ Given in (\ref{problem:p2main})}
    \label{Algorithm:WMMSE}
    \textbf{Input}: Maximum powers $P_{\text{max}}^{\text{BS}}$, $P_{\text{max}}^l,\forall l$. Channel coefficients $\hb_k,\forall k$, $\gb_l, \forall l$, $\fb, \forall l,k$. Initial values for $p_l^{(0)}$, $\w_k^{(0)}$ and stopping accuracy $\epsilon_1$.
    \begin{algorithmic}[1]
    \For {$n=1,2,...$},
            \State Update $\uk^{(n)}$ using (\ref{uk}).
            \State Update $\ul^{(n)}$ using (\ref{ul}).
            \State Update $\mu_k^{\text{DL}}$ using (\ref{muk}) while $e_k^{\text{DL}}$ is computed as in (\ref{ek}).
            \State Update $\mu_l^{\text{UL}}$ using (\ref{mul}) while $e_l^{\text{UL}}$ is computed as in (\ref{el}).
            \If{$J(0) \leq P_{\text{max}}^{\text{BS}}$}
                \State Update $\w_k^{(n)}=\w_k(0)$ as in (\ref{eqn: wk optimal with lambda})
                \Else \ Find $\lambda^{\text{opt}}$ using bi-section search, and update $\w_k^{(n)}$ using (\ref{eqn:wOpt}).
                \EndIf
            \State Update $p_l^{(n)}$ using (\ref{eqn:optimal pl}).
      \State \textbf{Until} $\left | \text{SWSR}^{(n)}-\text{SWSR}^{(n-1)}\right | < \epsilon_1$
      \EndFor
   \end{algorithmic}
    \textbf{Output}: The optimal solutions: $\w_k^{\text{opt}}=\w_k^{(n)}, \ \forall k$, $\ul^{\text{opt}}=\ul^{(n)}$ and $\rho_l^{\text{opt}}=(p_l^{(n)})^2, \  \forall l$.
\end{algorithm}
\subsection{Optimizing $\boldsymbol{\hat{\Theta}}$ by Gradient Method}\label{sec:Gradient}
In this section, we solve the main problem $\pr_1$ to optimize the phase shifts of IRSs for a given set of the DL beamformer,  the UL combining vector and the transmit UL power of the users. In the following, we firstly reformulate the modified optimization problem and then solve the problem by using the gradient approach.

\subsubsection{\textcolor{black}{Optimization Problem Transformation}}
Let us define $\bphi=\operatorname{diag}\left(\phi_{1,1},\ldots,\phi_{1,M_1},\ldots,\phi_{R,M_R}\right)$ such that ${\theta}_{r,m}=e^{j\phi_{r,m}}$. Then, we can present $\TH$ as a function of $\boldsymbol{\Phi}$, i.e. $\THPhi$. For the sake of simplicity of indices, we assume that $\phi_n$ is the $n$-th diagonal element of the matrix $\bphi \in \mathbb{H}^{M}$. Therefore, for a given set of variables ${\lbrace\w_k,\ul,p_l \rbrace}$, the optimization problem $\pr_1$ is presented by the following unconstrained version.
\begin{maxi}
	{\boldsymbol{\Phi}}{\mathcal{F}(\THPhi)\label{problem:p3}}
	{}{\mathcal{P}_3: \ \ }
\end{maxi}
where
\begin{equation}\label{eqn:theta transformed objective}
\begin{split}
    \mathcal{F}(\THPhi) &=\!  \alpha_1 \sum\limits_{k\in \K}\betkDL \log_2\left(\!1+\frac{\keciUEDL\kecitDL\B_{k,k}}{\kecitDL\isum\B_{i,i\neq k}\!+\!\bar{\keciUEDL}\kecitDL\B_{k,k}\!+\!\bar{\kecitDL}\isum\abss{\w_i}\Q_k\!+\!\Lsum\C_{l,k}\!+\!\NDL}\!\right) \\
    &+ \alpha_2 \sum\limits_{l\in \L} \betlUL \log_2\left(\!1+\frac{\keciUEUL\kecitUL\Bt_{l,l}}{\kecitUL\jsum\Bt_{j,j\neq l}\!+\!\bar{\keciUEUL}\kecitUL\Bt_{l,l}\!+\!\bar{\kecitUL}\abss{\ul}\jsum\T_j\!+\!\RSI\!+\!\NUL}\!\right).
\end{split}
\end{equation}
In the following, we explain the new appeared variables $\B_{k,i}$, $\C_{l,k}$, $\Bt_{l,j}$, $\Q_k$, and $\T_l$ in (\ref{eqn:theta transformed objective}) based on the quadratic terms in \eqref{eqn: DLSINR} and \eqref{eqn: ULSINR} and reformulate the SWSR in form of some quadratic terms at the DL/UL SINR, separately. For the DL SINR, by using change of variables $\hhat_k^H\THPhi\Hhat\w_i=\vHphi\aki$, where $\vphi=\left[\theta_1,\theta_2,...,\theta_{M_R}\right]^T$, $\aki=\operatorname{diag}(\hhat_k^H)\Hhat\w_i$ and $\cki=\h_k^H\w_i$, the quadratic term of the DL SINR in \eqref{eqn:theta transformed objective} is rewritten as
\begin{equation}\label{eqn:quad term in DL rate}
    \begin{split}
        \B_{k,i}&={\lvert{(\boldsymbol{h}_k^H +\hat{\boldsymbol{h}}_k^H\hat{\boldsymbol{\Theta}}\hat{\boldsymbol{H}})\boldsymbol{w}_i}\rvert}^2 = \vHphi\aki\aki^H\vphi+2\Re(\vHphi\aki\cki^*)+\abss{\cki} \\
        & = \sum\limits_{t=1}^{M} \ \extm \sum\limits_{u=1}^M\atilut \ \exup +2\Re\left(\sum\limits_{u=1}^M \Tilde{\boldsymbol{c}}_u \ \exup\right) + \abss{\cki},
    \end{split}
\end{equation}
where $\atilut$ is $(u,t)$-th entry of matrix $\Tilde{\boldsymbol{a}}_{k,i} = \aki\aki^H \in \mathbb{H}^{M}$ and $\Tilde{\boldsymbol{c}}_u$ is $u$-th element of vector $\Tilde{\boldsymbol{c}} = \aki\cki^* \in \mathbb{C}^{M\times 1}$, wherein index $k$ and $i$ have been eliminated for simplicity of notations.
\textcolor{black}{By using change of variables $\hhat^H\THPhi\Hhat=\vHphi\boldsymbol{m}_k$ where $\boldsymbol{m}_k=\operatorname{diag}(\hhat_k^H)\Hhat$, the quadratic term related to the hardware impairment of the DL SINR in \eqref{eqn:theta transformed objective} is given by
\begin{equation}\label{eqn:quad term related to HI in DL SINR}
    \begin{split}
        \Q_k&={\lvert{(\boldsymbol{h}_k^H +\hat{\boldsymbol{h}}_k^H\hat{\boldsymbol{\Theta}}\hat{\boldsymbol{H}})}\rvert}^2 = \vHphi\boldsymbol{m}_k\boldsymbol{m}_k^H\vphi+2\Re(\vHphi\boldsymbol{m}_k\h_k)+\abss{\h_k} \\
        & = \sum\limits_{t=1}^{M} \ \extm \sum\limits_{u=1}^M\mtilut \ \exup +2\Re\left(\sum\limits_{u=1}^M \ytil_u \ \exup\right) + \abss{\h_k},
    \end{split}
\end{equation}
where $\mtilut$ is $(u,t)$-th entry of matrix $\Tilde{\boldsymbol{m}}_k = \boldsymbol{m}_k\boldsymbol{m}_k^H \in \mathbb{H}^{M}$ and $\ytil_u$ is $u$-th element of vector $\Tilde{\boldsymbol{y}}=\boldsymbol{m}_k\h_k \in \mathbb{C}^{M\times 1}$.}
Moreover, by applying change of variables $\blk=\operatorname{diag}(\hhat_k^H)\ghat_l\sqrt{\rho_l}$ and $\fonelk=\flk\sqrt{\rho_l}$, the interference term given in the denominator of (\ref{eqn:theta transformed objective}) is reformulated by
\begin{equation}\label{eqn:quad interference term in DL rate}
    \begin{split}
       \C_{l,k}&= {\lvert{(\flk +\hat{\boldsymbol{h}}_k\hat{\boldsymbol{\Theta}}\hat{\boldsymbol{g}}_l)}\rvert}^2 \rho_l = \vHphi\blk\blk^H\vphi+2\Re(\vHphi\blk\fonelk^*)+\abss{\fonelk} \\
        & = \sum\limits_{t=1}^M \ \extm \sum\limits_{u=1}^M\btilut \ \exup +2\Re\left(\sum\limits_{u=1}^M \Tilde{\boldsymbol{f}}_u \ \exup\right) + \abss{\fonelk},
    \end{split}
\end{equation}
where $\Tilde{\boldsymbol{b}}_{l,k} = \blk\blk^H \in \mathbb{H}^{M}$ and $\Tilde{\boldsymbol{f}}_u $ is $u$-th element of vector $\Tilde{\boldsymbol{f}} = \blk\fonelk^* \in \mathbb{C}^{M\times 1}$.

Similarly for the UL SINR given in (\ref{eqn:theta transformed objective}), by using change of variables $\ul^H\Hhat^H\THPhi\ghat_j\sqrt{\rho_j}=\vHphi\zlj$, where $\zlj =\operatorname{diag}(\ul^H\Hhat^H)\ghat_j\sqrt{\rho_j}$ and $\dlj=\ul^H\g_j\sqrt{\rho_j}$, we have
\begin{equation}\label{eqn:quad term in UL rate}
    \begin{split}
       \Bt_{l,j}&= \abss{\ul^H\g_j+\ul^H\Hhat^H\THPhi\ghat_j}\rho_j = \vHphi\zlj\zlj^H\vphi+2\Re(\vHphi\zlj\dlj^*)+\abss{\dlj} \\
        & = \sum\limits_{t=1}^M \ \extm \sum\limits_{u=1}^M\ztilut \ \exup +2\Re\left(\sum\limits_{u=1}^M \Tilde{\boldsymbol{d}}_u \ \exup\right) + \abss{\dlj},
    \end{split}
\end{equation}
where $\ztilut$ is $(u,t)$-th entry of matrix $\Tilde{\boldsymbol{z}}_{l,j}=\zlj\zlj^H \in \mathbb{H}^{M}$ and $\dtil$ is $u$-th element of vector $\zlj\dlj^* \in \mathbb{C}^{M\times 1}$. \textcolor{black}{Moreover, for quadratic term related to hardware impairment of the UL SINR in \eqref{eqn:theta transformed objective}, by applying change of variables $\Hhat^H\THPhi\ghat_j=\stil_j\vphi$ where $\stil_j=\Hhat^H\operatorname{diag}(\ghat_j)\sqrt{\rho_j}$, we have
\begin{equation}\label{eqn:quad term related to HI in UL SINR}
    \begin{split}
       \T_j&= \abss{\g_j+\Hhat^H\THPhi\ghat_j}\rho_j = \vHphi\stil_j^H\stil_j\vphi+2\Re(\vHphi\stil_j^H\g_j\sqrt{\rho_j})+\abss{\g_j}\rho_j \\
        & = \sum\limits_{t=1}^M \ \extmT \sum\limits_{u=1}^M\xtilut \ \exupT +2\Re\left(\sum\limits_{u=1}^M \Tilde{\boldsymbol{e}}_u \ \exupT\right) + \abss{\g_j}\rho_j,
    \end{split}
\end{equation}
where $\xtilut$ is $(u,t)$-th entry of matrix $\Tilde{\boldsymbol{x}}_j=\stil_j^H\stil_j \in \mathbb{H}^M$ and $\etil_u$ is $u$-th element of vector $\stil^H\g_j \sqrt{\rho_j} \in \mathbb{C}^{M\times 1}$.}

\subsubsection{Gradient-based  Approach for $\mathcal{P}_3$}
As shown in \cite{huang2019reconfigurable}, the gradient-based search approach can be used to obtain appropriate phase shifts of the IRS. Here, we also apply this approach for multiple IRSs to solve the optimization problem $\pr_3$. \textcolor{black}{It is worth noting that this approach is not guaranteed to converge to a globally optimal value; however, locally optimal phase shift matrices are obtained.}
To this end, let us assume that $\boldsymbol{\Phi}^{(s)}$ denotes the phase vector at $s$-th iteration. Thus, the next iteration point is given by
\begin{equation}\label{gradient Ascent}
    \boldsymbol{\Phi}^{(s+1)} = \boldsymbol{\Phi}^{(s)} + \eta \nabla_{\boldsymbol{\Phi}}\mathcal{F}(\THPhiGDA),
\end{equation}
where 
\begin{equation}\label{gradient}
    \nabla_{\boldsymbol{\Phi}}\mathcal{F}(\THPhi) = \left[\frac{\partial \mathcal{F}(\THPhi)}{\partial\phi_1},\ldots,\frac{\partial \mathcal{F}(\THPhi)}{\partial\phi_{M_R}}\right]^T,
\end{equation}
is the gradient of the objective function (\ref{eqn:theta transformed objective}) and $\eta$ is the step size which can be found efficiently at each step by using backtracking line search based on the Armijo–Goldstein condition \cite{boyd2004convex}.  Subsequently, the elements of the gradient vector (\ref{gradient}) are evaluated as
\begin{equation}\label{derivative}
    \frac{\partial \mathcal{F}(\THPhi)}{\partial\phi_n} = \alpha_1\Ksum\betkDL \frac{\partial R_k^{\text{DL}}}{\partial \phi_n} +\alpha_2\Lsum\betlUL\frac{\partial R_l^{\text{UL}}}{\partial \phi_n}, 
\end{equation}
where $\frac{\partial R_k^{\text{DL}}}{\partial \phi_n}$ and $\frac{\partial R_l^{\text{UL}}}{\partial \phi_n}$ are given by \begin{equation*}
\hspace*{-15Cm}
\frac{\partial R_k^{\text{DL}}}{\partial \phi_n} = 
\end{equation*}
\begin{equation}\label{Der DL}
 =\frac{F_1\B_{k,k}^{\prime}\!\left(\!\kecitDL\isum\B_{k,i\neq k}\!+F_3\Q_k\!+\Lsum\C_{l,k}\!+\NDL\!\right)-F_1\B_{k,k}\!\left(\!\kecitDL\isum\B_{k,i\neq k}^{\prime}\!+F_3\Q_k^{\prime}\!+\Lsum\C_{l,k}^{\prime}\!\right)}{\left(\!\kecitDL\isum\B_{k,i\neq k}\!+F_2\B_{k,k}\!+F_3\Q_k\!+\Lsum\C_{l,k}\!+\NDL\!\right)\!\left(\!\kecitDL\isum\B_{k,i}\!+F_3\Q_k\!+\Lsum\C_{l,k}\!+\NDL\right)},
\end{equation}

\begin{equation*}
    \hspace*{-15Cm}    \frac{\partial R_l^{\text{UL}}}{\partial \phi_n} =
\end{equation*}
\begin{equation}\label{Der UL}
=  \frac{E_1\Bt_{l,j}^{\prime}\!\left(\!\kecitUL\jsum\Bt_{l,j\neq l}\!+E_3(\ul)\jsum\T_j\!+E_4(\ul)\!\right)-E_1\Bt_{l,l}\!\left(\!\kecitUL\jsum\Bt_{l,j\neq l}^{\prime}\!+E_3(\ul)\jsum{\T_j^{\prime}}\!\right)}{\left(\!\kecitUL\jsum\Bt_{l,j\neq l}\!+E_3(\ul)\jsum\T_j\!+E_4(\ul)\!\right)\!\left(\!\kecitUL\jsum\Bt_{l,j}\!+E_2\Bt_{l,l}\!+E_3(\ul)\jsum\T_j\!+E_4(\ul)\!\right)},
\end{equation}
where $F_1=\keciUEDL\kecitDL$, $F_2=\bar{\keciUEDL}\kecitDL$, $F_3=\bar{\kecitDL}\isum\abss{\w_i}$, $E_1=\keciUEUL\kecitUL$, $E_2=\bar{\keciUEUL}\kecitUL$, $E_3(\ul)=\bar{\kecitUL}\abss{\ul}$, and $E_4(\ul)=\RSI+\NUL \abss{\ul}$.
Thus, based on the fact that $\Tilde{\boldsymbol{a}} = \Tilde{\boldsymbol{a}}^H$, $\Tilde{\boldsymbol{b}} = \Tilde{\boldsymbol{b}}^H$,
$\Tilde{\boldsymbol{m}} = \Tilde{\boldsymbol{m}}^H$,
$\Tilde{\boldsymbol{z}} = \Tilde{\boldsymbol{z}}^H$, and $\Tilde{\boldsymbol{x}} = \Tilde{\boldsymbol{x}}^H$,
derivative of each of the defined terms given in (\ref{eqn:quad term in DL rate})--(\ref{eqn:quad term related to HI in UL SINR}) are presented as  
    \begin{equation}\label{DerB}
        \B_{k,i}^{\prime}=\frac{\partial\B_{k,i}}{\partial{\phi_n}} = 2\Re\left(j\exnp\left(\Tilde{\boldsymbol{c}_n}+\sum\limits_{t\neq n}^{M}\atilnt\extm\right)\right),
    \end{equation}
    \begin{equation}\label{DerC}
        \C_{l,k}^{\prime}=\frac{\partial\C_{l,k}}{\partial{\phi_n}} = 2\Re\left(j\exnp\left(\Tilde{\boldsymbol{f}_n}+\sum\limits_{t\neq n}^{M}\btilnt\extm\right)\right),
    \end{equation}
    \begin{equation}\label{DerQ}
    \Q_{k}^{\prime}=\frac{\partial\Q_{k}}{\partial{\phi_n}} = 2\Re\left(j\exnp\left(\Tilde{\boldsymbol{y}_n}+\sum\limits_{t\neq n}^{M}\mtilnt\extm\right)\right),
    \end{equation}
    \begin{equation}\label{DerBt}
        \Bt_{l,j}^{\prime}=\frac{\partial\Bt_{l,j}}{\partial{\phi_n}} = 2\Re\left(j\exnp\left(\Tilde{\boldsymbol{d}_n}+\sum\limits_{t\neq n}^{M}\ztilnt\extm\right)\right),
    \end{equation}
        \begin{equation}\label{DerT}
        \T_{j}^{\prime}=\frac{\partial\T_{j}}{\partial{\phi_n}} = 2\Re\left(-j\exnpT\left(\Tilde{\boldsymbol{e}_n}+\sum\limits_{t\neq n}^{M}\xtilnt\extmT\right)\right).
    \end{equation}
\begin{algorithm}[t]
\small
    \caption{Iterative Algorithm for Solving $\pr_1$}
    \label{Algorithm:overall}
    \textbf{Input}: Initial value for $\TH^{(0)}$, maximum powers $P_{\text{max}}^{\text{BS}}$, $P_{\text{max}}^l,\forall l$, channel coefficients $\hb_k,\forall k$, $\gb_l, \forall l$, $\fb, \forall l,k$. Initial values for $p_l^{(0)}$, $\w_k^{(0)}$, tolerances $\lbrace{\epsilon_1,\epsilon_2,\epsilon_3}\rbrace$.
    \begin{algorithmic}[1]
         \For {$i=1,2,...$}
         \State For given $\w_k^{(i)},\forall k$, $\ul^{(i)}$ and $\rho_l^{(i)}, \forall l$, update $\TH^{(i)}$
         \For {$s=0,1,...$}
         \State Update ascent direction $\boldsymbol{d}^{(s)}$using (\ref{gradient}).
         \State Update $\boldsymbol{\Phi}^{(s+1)}$ using (\ref{gradient Ascent})
         \State \textbf{Until} $\normm{\nabla_{\boldsymbol{\Phi}}\mathcal{F}(\THPhiGDA)} < \epsilon_2$; Obtain $\TH^{(i+1)}=\TH^{(s+1)}$
         \EndFor
         \State For given $\TH^{(i+1)}$ update $\w_k^{(i+1)},\forall k$, $\ul^{(i+1)}$ and $\rho_l^{(i+1)}, \forall l$ by solving problem $\pr_2$ using \textbf{Algorithm}1.
         \State \textbf{Until} $\left  | \text{SWSR}^{(i+1)}-\text{SWSR}^{(i)}\right |<\epsilon_2$.
         \EndFor  
    \end{algorithmic}
    \textbf{Output}: The optimal solutions: $\w_k^{\text{opt}}=\w_k^{(i)},\forall k$, $\ul^{\text{opt}}=\ul^{(i)}$,  $\rho_l^{\text{opt}}=\rho_l^{(i)}, \forall l$ and $\TH^{\text{opt}}=\TH^{(i+1)}$.
\end{algorithm}
\subsection{Complexity Analysis}\label{sec: complexity}
In the preceding sections, we investigated the maximization of the SWSR, i.e. the optimization problem $\pr_1$, by transforming the original  optimization problem into several sub-problems. The whole iterative procedure for solving problem $\pr_1$ is  summarized in Algorithm 2. In \textbf{Algorithm 1}, the complexity of computing $\ul$ in step 3 is $\mathcal{O}(N_t^3)$, complexity of computing $\w_k$ in step 7 which performs matrix inversion is $\mathcal{O}(N_t^3)$. Moreover, \textcolor{black}{by assuming that the step 8 is used to  find the $\w_k^{\text{opt}}$, the complexity of the eigenvalue decomposition of matrix $A$ which is used in step 8 is $\mathcal{O}(N_t^3)$; also, the bi-section search to find $\lambda^{\text{opt}}$ in step 8 adds $\mathcal{O}(\log(\frac{\lambda_{\text{max}}-\lambda_{\text{min}}}{\varepsilon}))$ complexity where $\varepsilon$ is the error tolerance.} The number of iterations in this algorithm is denoted by $I_{\mathrm{mse}}$.
As for \textbf{Algorithm 2}, we can see that optimizing the phase shifts of the IRSs relies on the number of $I_{\mathrm{ga}}$, i.e. the gradient ascent iteration. Thus, the complexity of problem $\pr_3$ is $I_{\mathrm{ga}}\mathcal{O}(M^2)$. Therefore, the total complexity of \textbf{Algorithm 2} is $\mathcal{O}\left(I_{\mathrm{tol}}\left(I_{\mathrm{mse}}\left(3 N_t^3+\log(\frac{\lambda_{\text{max}}-\lambda_{\text{min}}}{\varepsilon})\right)+I_{\mathrm{ga}}M^2\right)\right)$ where $I_{\mathrm{tol}}$ represents the number of iteration in \textbf{Algorithm 2}. 
The iterative WMMSE approach developed in \textbf{Algorithm 1} is based on block coordinate descent (BCD) method and its convergence is guaranteed as discussed in \cite{shi2011iteratively}.
Meanwhile, the provided numerical results approves that the \textbf{Algorithm 2} converges in a few iterations.
\section{Numerical Results} \label{sec:simulation}
In this section, numerical results are presented to highlight performance of the proposed system for various examples. It is assumed that the BS is equipped with a uniform linear array with $N_t =4$ antennas and is located at $(0,0)$, also two IRSs are deployed; one is located at  $(100 \text{m},0 \text{m})$ and the second one is placed at $(-100 \text{m},0 \text{m})$. Moreover, the number of uplink  and downlink users are $L=3$ and  $K=2$, respectively. The large scale path loss is modeled by $\text{PL}=-35.6-10\alpha\log_{10}(d)$ dB, wherein $d$ is the relative distance between transmitter--receiver pair and the path loss exponents are $\alpha_{\text{BI}}=2.1$ for the BS-IRSs links,   $\alpha_{\text{IU}}=2.2$ for the IRS--user, $\alpha_{\text{BU}}=4$ for the BS--users, and  $\alpha_{\text{UU}}=3.1$ for user--user. Since the IRSs are usually deployed in practice near the BS or near the users \cite{Larsson-May2020-Weighted-SR}, the corresponding channels are modeled as line-of-sight (LOS) ones. Therefore, the small-scale channels $\Tilde{\boldsymbol{H}}_r$ where $r\in \lbrace{1,2}\rbrace$ are modeled by Rician fading as follows
\begin{equation}\label{Rician H}
    \Tilde{\boldsymbol{H}}_r=\sqrt{\frac{\kappa}{1+\kappa}}\boldsymbol{a}_{M}(\vartheta^{\text{AoA}})\boldsymbol{a}_{N_t}^H(\vartheta^{\text{AoD}}) + \sqrt{\frac{1}{1+\kappa}} \boldsymbol{H}^{\text{NLOS}},
\end{equation}
and for small-scale channels $\Tilde{\h} \in \lbrace{\Tilde{\hsss},\Tilde{\gs}}\rbrace$, we have
\begin{equation}\label{Rician h,g}
    \Tilde{\h}=\sqrt{\frac{\kappa}{1+\kappa}}\boldsymbol{a}_{M}(\vartheta^{\text{AoA}}) + \sqrt{\frac{1}{1+\kappa}} \h^{\text{NLOS}},
\end{equation}
where $\kappa=4$ denotes the Rician factor and $\boldsymbol{a}_{M}(\vartheta^{\text{AoA}})\boldsymbol{a}_{N_t}^H(\vartheta^{\text{AoD}})$ represents the LOS component. Variables $\vartheta^{\text{AoA}}$ and $\vartheta^{\text{AoD}}$ denote the angle-of-arrival (AoA) and angle-of-departure (AoD) of IRSs which are  uniformly distributed over $[0,2\pi)$, respectively. The term  {$\boldsymbol{a}_{n} \in \mathbb{C}^{n \times 1}$ denotes  the steering vector} and is defined as
\begin{equation}\label{steering vector}
\boldsymbol{a}_{n}=\left[1, e^{j \frac{2 \pi D}{\lambda_r} \sin \vartheta}, \ldots, e^{j \frac{2 \pi D}{\lambda_r}(n-1) \sin \vartheta}\right]^{T},
\end{equation}
where $D$ is the antenna element separation, $\lambda_r$ is the carrier wavelength and $\frac{D}{\lambda_r}=1/2$ is used. The NLOS components, i.e. $\boldsymbol{H}^{\text{NLOS}}$, $\boldsymbol{h}^{\text{NLOS}}$ and channels between users are modeled by  zero-mean and unit variance Rayleigh distribution RVs. \textcolor{black}{It is also assumed that the hardware are perfect, unless is further specified with providing the hardware quality factors.} Also, the weighting parameters $\alpha_1$ and $\alpha_2$  are assumed to be one unless their values are provided. Other simulation parameters are listed in Table 1.

\begin{table}[]
\centering
\caption{Numerical Results Parameters}
\label{tab:my-table}
\scalebox{1}{
\begin{tabular}{|l|l|}
\hline
\textbf{Parameters} & \textbf{Values}\\ \hline
Maximum transmission power at BS, $P_{\text{max}}^{\text{BS}}$ & 35 [dBm] \\ \hline
 Maximum transmission power at UL users, $P_{\text{max}}^l,\forall l$& 11 [dBm] \\ \hline
Noise power at DL users & -100 [dBm] \\ \hline
  Algorithms convergence parameters, $\lbrace{\epsilon_1,\epsilon_2,\epsilon_3}\rbrace$ & $\lbrace{10^{-3},10^{-4},10^{-3}}\rbrace$ \\ \hline
Weights of DL and UL users $\betkDL,\betlUL, \forall k,l$& 1 \\ \hline
Residual self interference channel variance. $\hat{\sigma}^2$ & -95 [dBm] \\ \hline
 Noise power at BS & -110 [dBm] \\ \hline 
 Rician factor for reflecting links & 6 [dB] \\ \hline
\end{tabular}
}
\end{table}

\subsection{Convergence}
\begin{figure}[!t]
   \centering
   \begin{minipage}[b]{0.45\textwidth}
   \psfrag{iteration}[][][0.76]{Number of Iterations}
  \psfrag{ssr}[][][0.76]{SWSR [bpcu]}
  \psfrag{Nt=4 Mt=10 K=1 L= 3 AAAAAAAAAA}[][][0.6]{\ \ $N_t=4$, $M=10$, $K=1$, $L= 3$}
  \psfrag{Nt=2 Mt=4 K=2 L=3 AAAAAAAAAAAA}[][][0.6]{\!$N_t=2$, $M=4$, $K=2$, $L=3$}
  \psfrag{Nt=4 Mt=10 K=1 L=1AAAAAAAAAAAA}[][][0.6]{$N_t=4$, $M=10$, $K=1$, $L=1$}
  \psfrag{Nt=2 Mt=4 K=1 L=1AAAAAAAAAAAAA}[][][0.6]{\!\!\!\!$N_t=2$, $M=4$, $K=1$, $L=1$}
  \hspace*{-0.72 cm}
\includegraphics[scale=0.63]{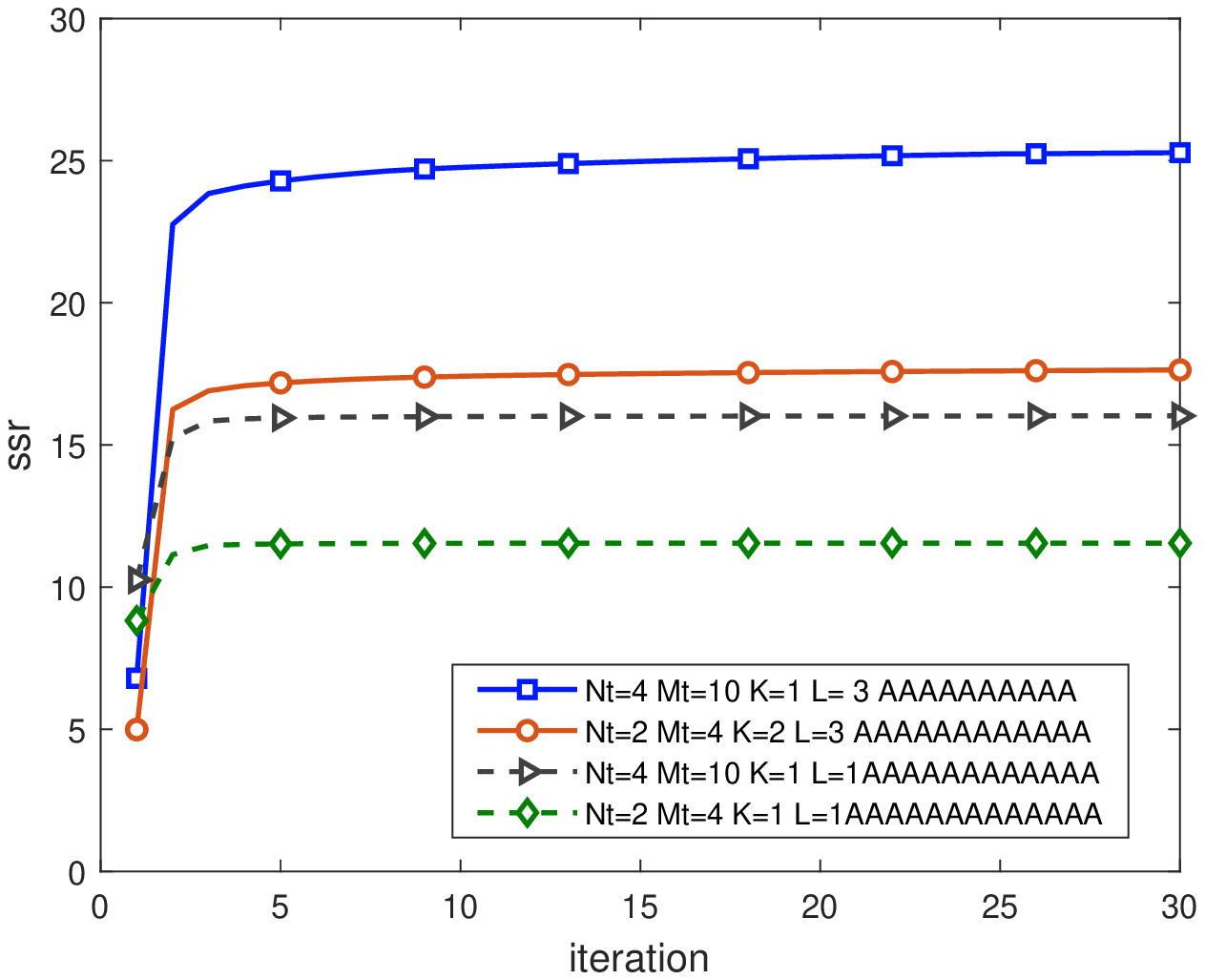}
\caption{Convergence of the proposed algorithm.}
\label{fig:convergence}
    \end{minipage}
    \hfill
    \begin{minipage}[b]{0.45\textwidth}
       \centering
   \psfrag{Number of IRS element(M1=M2)}[][][0.76]{Number of elements of each IRS $(M_1=M_2)$}
   \psfrag{System Sum Rate [bits/s/Hz]}[][][0.76]{SWSR [bpcu]}
   \psfrag{Scheme 1 AA}[][][0.6]{\!\!\!Scheme 1}
   \psfrag{Scheme 2 AA}[][][0.6]{\!\!\!Scheme 2}
   \psfrag{Scheme 3 AA}[][][0.6]{\!\!\!Scheme 3}
   \psfrag{Scheme 1 HD AA}[][][0.6]{Scheme 1 HD}
   \psfrag{Scheme 2 HD AA}[][][0.6]{Scheme 2 HD}
   \psfrag{Scheme 1 HI AA}[][][0.6]{Scheme 1 HI}
   \psfrag{Scheme 2 HI AA}[][][0.6]{Scheme 2 HI}
   \hspace*{-1.2 cm}
   \includegraphics[scale=0.63]{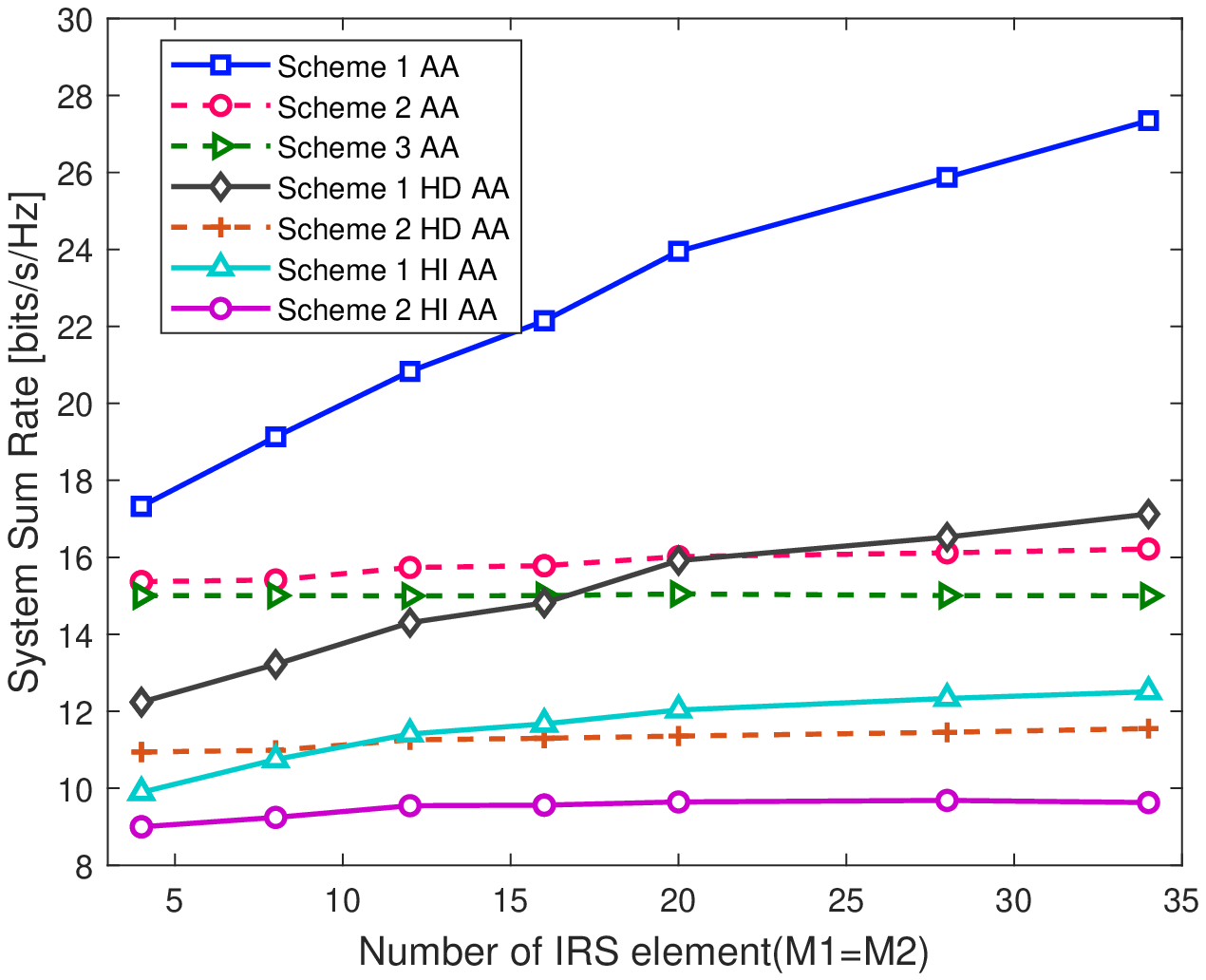}
\caption{System weighted sum-rate versus number of IRSs elements for different schemes.}
\label{fig:SSRvsM}
    \end{minipage}
\end{figure}
Convergence of the proposed  algorithm, i.e, \textbf{Algorithm 2}, is discussed in Fig. \ref{fig:convergence}; the SWSR versus the number of iterations is depicted for various sets of number of IRSs elements, BS antennas, and users at the UL and the DL. 
It is also assumed that the UL users are randomly and uniformly located in a circle centered at $(-100,5)$ with radius of 10 m and, the downlink users are uniformly located in a circle centered at $(100,5)$ with radius of 10 m. As it is shown,  the proposed algorithm converges rapidly for all the sets of parameters. For instance, the algorithm converges in $10$ iterations on average for the case of $\{N_t=4,M=20,K=2,L=3\}$. Also, by increasing the size of IRSs, the BS antennas, and the number of users, it takes more iterations to converge \textcolor{black}{since more optimization variables are involved enlarging the search space for the solution. Besides, as we discussed in Section \ref{sec: complexity}, by increasing the size of IRSs or BS antennas, in each iteration, the proposed algorithm has higher complexity.} 

\subsection{Impact of Number of Reflecting Elements at IRS}
Fig. \ref{fig:SSRvsM} illustrates the SWSR versus the size of IRSs for $M_1=M_2=M$. For comparison, we consider three schemes; scheme 1 denotes the general proposed optimization algorithm, scheme 2 indicates the case wherein the IRSs phase shifts are fixed while the beamformer, UL combining vector  and UL users' powers are optimized by \textbf{Algorithm 1}, and for scheme 3, no IRSs are adopted, i.e. \textit{without} utilizing any IRS ($\vphi=\Vec{\boldsymbol{0}}$) and \textbf{Algorithm 1} is also used for the optimization. Moreover, the HD version of the schemes 1 and 2 are analyzed for benchmarking, i.e, the DL and UL transmissions are performed in two equal time slots and thus there is no self-interference signals as well. It is shown that by increasing the number of elements of IRSs, the proposed algorithm, either in FD mode or HD mode, significantly outperforms the other schemes, since there are more degrees of freedom for customizing the channels between the BS and the IRSs and the channels between the IRSs and the users. Moreover, it can be observed that by exploiting the BS in FD mode, the gain of increasing $M$ is more beneficial than that of the HD mode. Also, the impact of increasing $M$ is depicted for the proposed algorithm when $\kecitDL=\kecitUL=\keciUEDL=\keciUEUL=0.92$. It is shown that by increasing the number of $M$, the proposed algorithm improves at a lower rate rather than using the ideal hardware. However, it outperforms scheme 2 when hardware impairments are assumed. \textcolor{black}{It can be seen that despite the increasing of IRSs elements, the SWSR is limited by a specific value because of the hardware impairment, whereas this limitation does not exist when devices are perfect.} 
\subsection{Impact of Maximum Transmit Power at DL and UL}
    \psfrag{Scheme 1 AAAf}[][][0.6]{\!\!\!Scheme 1}
    \psfrag{Scheme 2 AAAf}[][][0.6]{\!\!\!Scheme 2}
    \psfrag{Scheme 3 AAAf}[][][0.6]{\!\!\!Scheme 3}
    \psfrag{Scheme 1 HD AAAf}[][][0.6]{Scheme 1 HD}
    \psfrag{Scheme 2 HD AAAf}[][][0.6]{Scheme 2 HD}
    \psfrag{SSR}[][][0.76]{SWSR [bpcu]}
    \psfrag{powerDL}[][][0.76]{$P_{\text{max}}^{\text{BS}}$}
    \psfrag{Scheme 1 DL AAAq}[][][0.6]{Scheme 1 DL}
    \psfrag{Scheme 2 DL AAAq}[][][0.6]{Scheme 2 DL}
    \psfrag{Scheme 3 DL AAAq}[][][0.6]{Scheme 3 DL}
    \psfrag{Scheme 1 UL AAAq}[][][0.6]{Scheme 1 UL}
    \psfrag{Scheme 2 UL AAAq}[][][0.6]{Scheme 2 UL}
    \psfrag{Scheme 3 UL AAAq}[][][0.6]{Scheme 3 UL}
    \psfrag{sumrate}[][][0.76]{Sum-Rate [bpcu]}
\begin{figure}[!t]
   \centering
   \hspace*{-0.75 cm}
    \subfloat[]{\label{fig:SSRvsDLpower}{\includegraphics[scale=0.63]{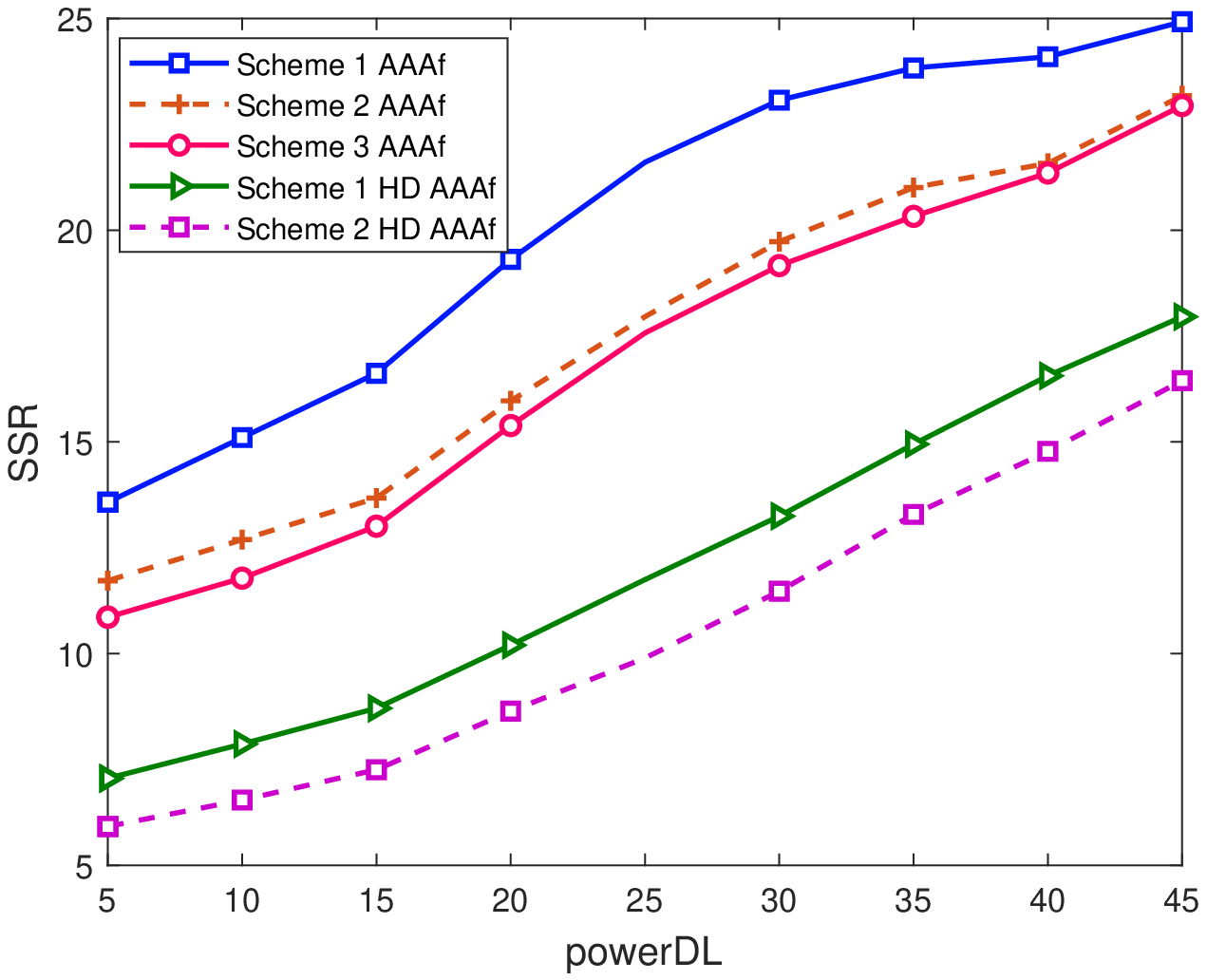} }}
    \hspace*{-0.9 cm}
    \subfloat[]{\label{fig:SumRateVsDLpower}{\includegraphics[scale=0.63]{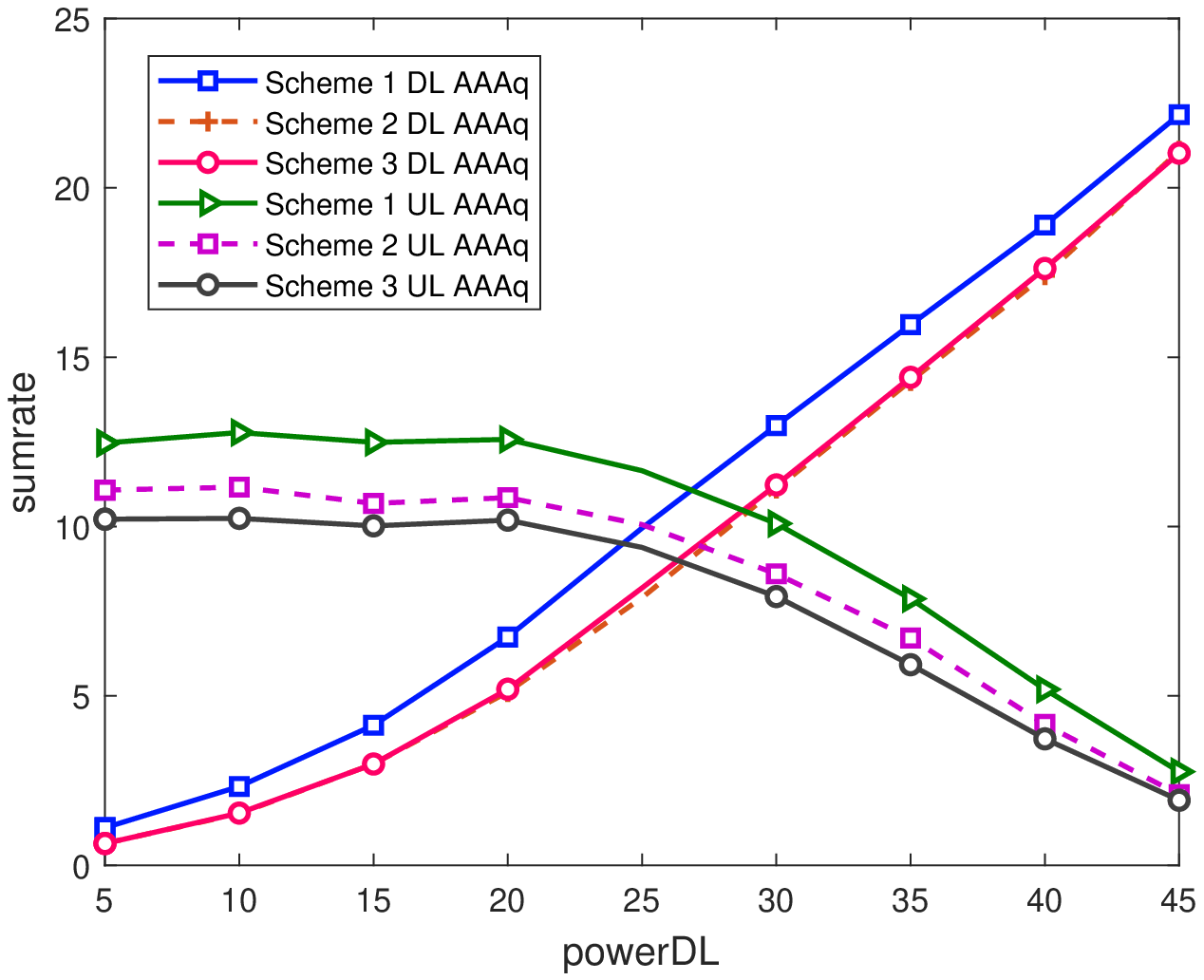} }}    

    \label{fig:DLpower}
    \caption{Performance of different schemes for various maximum transmit power at BS, i.e. $P_{\text{max}}^{\text{BS}}$ for $K=2$, $L=3$, $N_t=4$, $M=14$, and $P_{\text{max}}^l=11$ {[dBm]}. (a) SWSR versus $P_{\text{max}}^{\text{BS}}$. (b) Sum-Rate versus $P_{\text{max}}^{\text{BS}}$.}
\end{figure}

The SWSR as a function of the maximum transmit power of the BS, i.e. $P_{\text{max}}^{\text{BS}}$, is depicted in  Fig. \ref{fig:SSRvsDLpower} and  performance of the three benchmarking schemes with FD and HD scenarios are compared for $K=2$, $L=3$, $N_t=4$, $M=14$, and $P_{\text{max}}^l=11$ \ [dBm]. Although the self-interference power degrades the uplink performance, it is shown that the SWSR improves monotonically by increasing the maximum transmit power of the BS. \textcolor{black}{As a result, to maximize the SWSR for high values of the maximum power of the BS, it seems that the UL users do not participate in the system performance.} Also, the proposed algorithm, i.e. scheme 1 which optimizes the UL-DL parameters along with the phase shifts of the IRSs,  outperforms the others. Moreover, utilizing a FD enlarges the achievable rates about twice. Fig. \ref{fig:SumRateVsDLpower} also individually investigates the effect of $P_{\text{max}}^{\text{BS}}$ on the sum rates of UL and DL data transmissions. In contrast to the sum-rate of downlink, as expected, the sum-rate of the uplink degraded by increasing the maximum transmit power of the BS due to increasing the power of the self-interference.
    \psfrag{Scheme 1 AAAf}[][][0.6]{\!\!\!Scheme 1}
    \psfrag{Scheme 2 AAAf}[][][0.6]{\!\!\!Scheme 2}
    \psfrag{Scheme 3 AAAf}[][][0.6]{\!\!\!Scheme 3}
    \psfrag{Scheme 1 HD AAAf}[][][0.6]{Scheme 1 HD}
    \psfrag{Scheme 2 HD AAAf}[][][0.6]{Scheme 2 HD}
    \psfrag{SSR}[][][0.76]{SWSR [bpcu]}
    \psfrag{Scheme 1 DL AAAe}[][][0.6]{Scheme 1 DL}
    \psfrag{Scheme 2 DL AAAe}[][][0.6]{Scheme 2 DL}
    \psfrag{Scheme 3 DL AAAe}[][][0.6]{Scheme 3 DL}
    \psfrag{Scheme 1 UL AAAe}[][][0.6]{Scheme 1 UL}
    \psfrag{Scheme 2 UL AAAe}[][][0.6]{Scheme 2 UL}
    \psfrag{Scheme 3 UL AAAe}[][][0.6]{Scheme 3 UL}
    \psfrag{sumrate}[][][0.76]{Sum-Rate [bpcu]}
    \psfrag{powerUL}[][][0.76]{$P_{\text{max}}^{\text{UL}}$}
\begin{figure}[!t]
   \centering
   \hspace*{-0.75 cm}
    \subfloat[]{\label{fig:SSRvsULpower}{\includegraphics[scale=0.63]{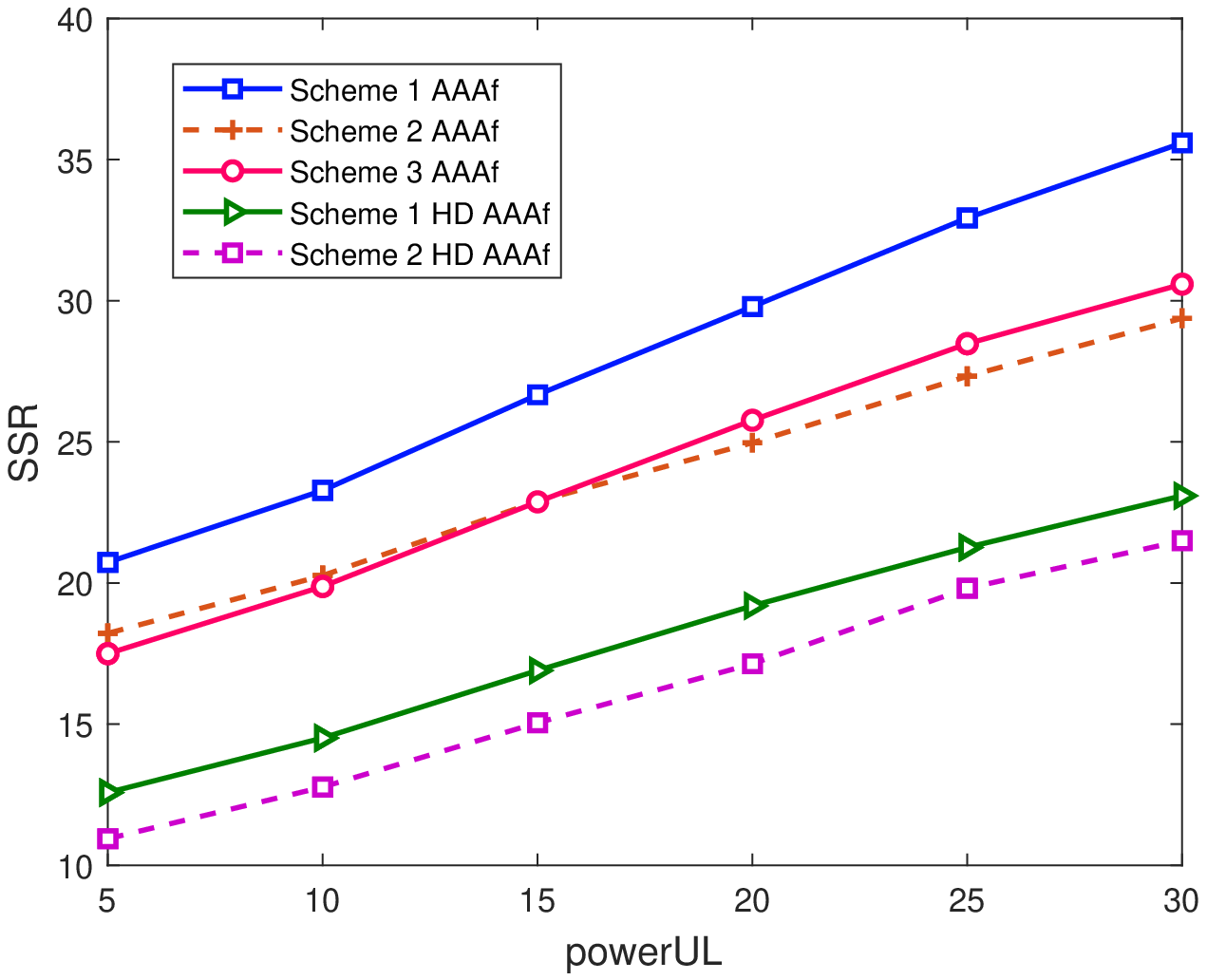} }}
    \hspace*{-0.9 cm}
    \subfloat[]{\label{fig:SumRateVsULpower}{\includegraphics[scale=0.63]{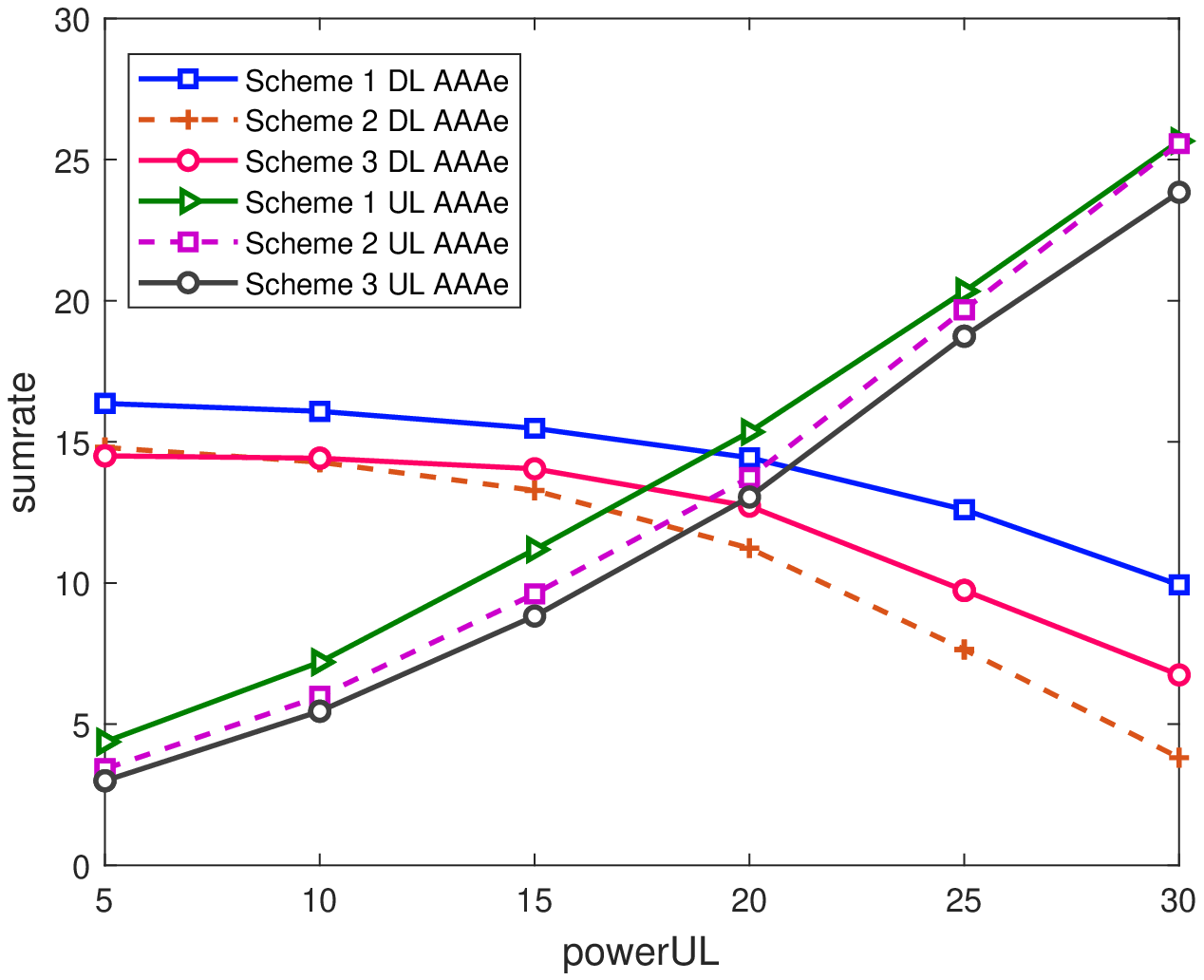} }} 
    \caption{Performance of different schemes for various maximum transmit power at UL users, i.e. $P_{\text{max}}^{\text{UL}}$ for $K=2$, $L=3$, $N_t=4$, $M=14$, and $P_{\text{max}}^{\text{BS}}=35$ {[dBm]}. (a) SWSR versus $P_{\text{max}}^{\text{UL}}$. (b) Sum-Rate versus $P_{\text{max}}^{\text{UL}}$.}
    \label{fig:uplinkPower}    
\end{figure}

Similarly, the effect of the maximum transmit power of UL users on the SWSR and sum rates of UL and DL are investigated in Fig. \ref{fig:uplinkPower}. For the three schemes with FD and HD scenarios, the achievable rates versus $P_{\text{max}}^l=P_{\text{max}}^{\text{UL}}$ is depicted for the case of $K=2$, $L=3$, $N_t=4$, $M=14$, and $P_{\text{max}}^{\text{BS}}=35$ [dBm]. From Fig. \ref{fig:SSRvsULpower}, it can be seen that the SWSR increases by increasing $P_{\text{max}}^{\text{UL}}$ and scheme 1 outperforms the others, and its increasing rate is more. \textcolor{black}{Interestingly, by increasing the maximum transmit power at the UL users, scheme 3, i.e. no-IRS, outperforms scheme 2 with fixed IRS since the non-optimized IRSs in scheme 2 can reflect more power of the signals transmitted by UL users to undesired points. Consequently, the performance of the system degrades rather than not adopting IRSs. For instance, for $P_{\text{max}}^{\text{UL}}\geq 15$ [dBm], scheme 3 surpasses scheme 2 with a fixed-phase IRSs.}  Moreover, the uplink and downlink sum rates are demonstrated in Fig. \ref{fig:SumRateVsULpower}; the uplink sum-rate increases in $P_{\text{max}}^{\text{UL}}$, and in contrast, the downlink sum-rate degrades by increasing $P_{\text{max}}^{\text{UL}}$.

\subsection{Downlink-Uplink Rate Region Trade-off}
\begin{figure}[!t]
   \centering
   \begin{minipage}[b]{0.45\linewidth}
   \centering
    \psfrag{Scheme 1 E1 AAAAA}[][][0.6]{\!\!\!\!Scheme 1 $E_1$}
    \psfrag{Scemem 2 E1 AAAAA}[][][0.6]{\!\!\!\!\!Scheme 2 $E_1$}
    \psfrag{Scheme 3 E1 AAAAA}[][][0.6]{\!\!\!\!Scheme 3 $E_1$}
    \psfrag{Scheme 1 E2 AAAAA}[][][0.6]{\!\!\!\!Scheme 1 $E_2$}
    \psfrag{Scheme 2 E2 AAAAA}[][][0.6]{\!\!\!\!Scheme 2 $E_2$}
    \psfrag{Scheme 2 E1 BDC AAAAA}[][][0.6]{Scheme 2 $E_1$ BDC}
    \psfrag{Scheme 1 E1 BDC AAAAA}[][][0.6]{Scheme 1 $E_1$ BDC}
    \psfrag{Downlink [bits/s/Hz]}[][][0.76]{Downlink Sum-Rate [bpcu]}
    \psfrag{Uplink [bits/s/Hz]}[][][0.76]{Uplink Sum-Rate [bpcu]}
    \hspace*{-0.67 cm}
    \includegraphics[scale=0.63]{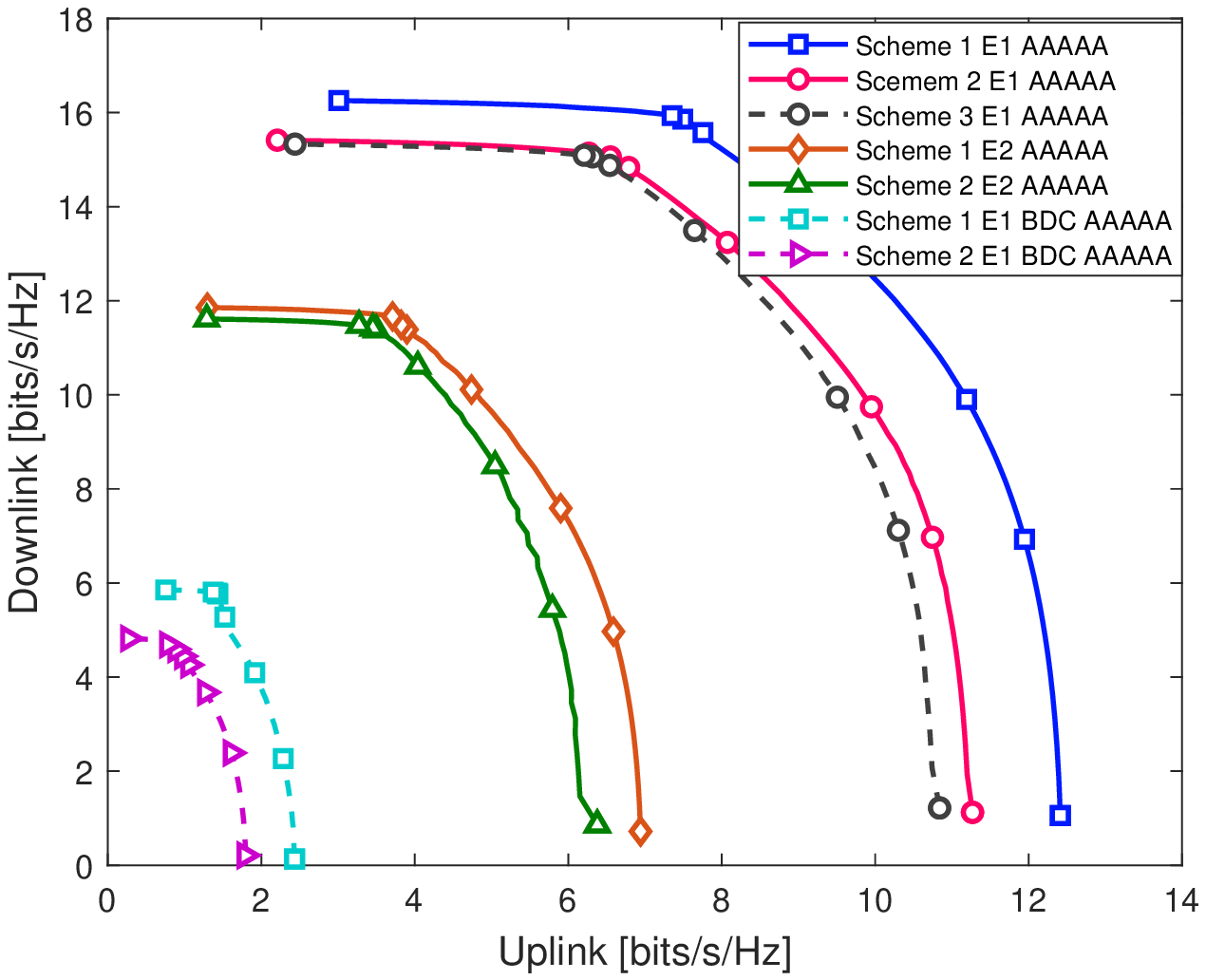}
    \caption{Sum-rate trade-off region of uplink and downlink.}
    \label{fig:Uplink-Downlink-Region}
    \end{minipage}
    \hfill
    \begin{minipage}[b]{0.45\linewidth}
    \centering
    \psfrag{Scheme 1 AAAAAA}[][][0.6]{\!\!\!\!Scheme 1}
    \psfrag{Scheme 2 AAAAAA}[][][0.6]{\!\!\!\!Scheme 2}
    \psfrag{Scheme 1 HI AAAAAA}[][][0.6]{\!Scheme 1 HI}
    \psfrag{Scheme 2 HI AAAAAA}[][][0.6]{\!Scheme 2 HI}
    \psfrag{Scheme 4 AAAAAA}[][][0.6]{\!\!\!\!Scheme 4}
    \psfrag{Scheme 3 AAAAAA}[][][0.6]{\!\!\!\!Scheme 3}
    \psfrag{SSR}[][][0.76]{SWSR [bpcu]}
    \psfrag{CDF}[][][0.76]{CDF}
    \psfrag{AAA}[][][0.5]{$100\%$}
    \hspace*{-1.2 cm}
    \includegraphics[scale=0.63]{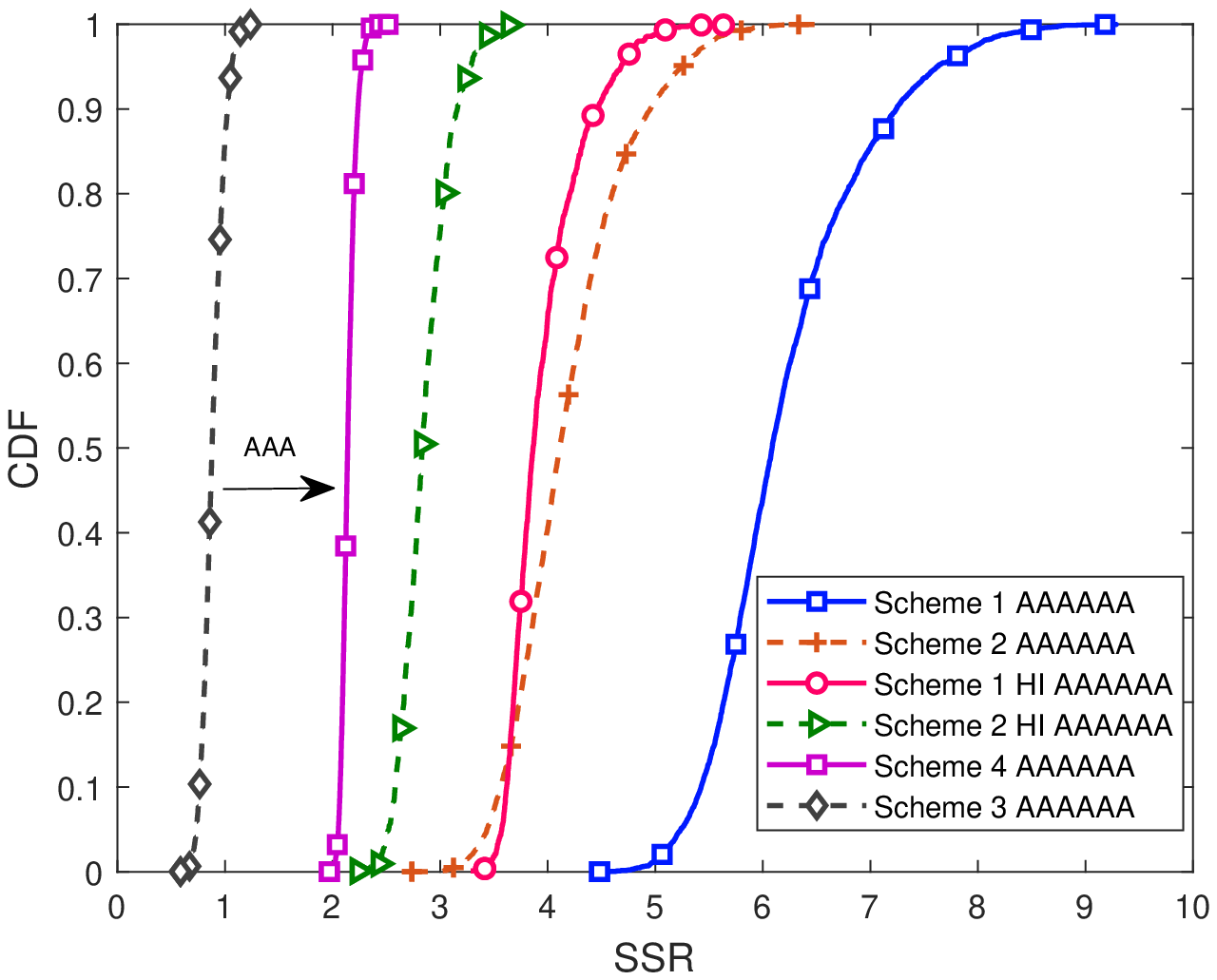}
    \caption{CDF curves for random uplink and downlink user locations.}
    \label{fig:CDF}
    \end{minipage}
\end{figure}

Fig. \ref{fig:Uplink-Downlink-Region} illustrates the sum rates region of the uplink and downlink for various schemes with two sets of parameters  $E_1=\lbrace N_t=4,M=20\rbrace$ and $E_2=\lbrace N_t=2,M=8\rbrace$, and the number of downlink and uplink users are $K=2$ and $L=3$, respectively. \textcolor{black}{In order to observe the impact of adopting multiple IRSs in an environment that  obstacles almost block the direct links}, the performance of the system \textit{without} direct communications links between the BS-to-user pairs are also considered. This case is named blocked direct channels (BDC), e.g. $|\h_k|$ and $ |g_l|$ approach zeros for all users. Generally, it is observed that maximizing the uplink sum-rate degrades the performance of the downlink sum-rate, and vice versa. Also, it is shown that increasing the number of IRSs elements and BS antennas enlarges the region. Finally, for the weak direct channels cases, i.e. BDC case, the region shrinks dramatically. 

\subsection{Cumulative distribution function of SWSR}
The cumulative distribution function (CDF) of the SWSR  for various schemes are presented in Fig. \ref{fig:CDF}. The two IRSs are respectively located at positions $(100,0)$ and $(-100,0)$, the {3} uplink users are located randomly in a circle centred at $(100,14)$ with radius $9$ m and the {2} downlink users are located randomly in a circle centered  at $(-100,-14)$ with radius $9$ m. Also, the size of IRSs and BS antennas are  $M_1=7,M_2=7,N_t=4$, and the direct channels between the BS and the users are blocked. Performance of the three schemes with perfect hardware and non-ideal hardware, i.e. $\kecitDL=\kecitUL=\keciUEDL=\keciUEUL=0.92$, are compared and scheme 4 is also presented. As for scheme 4, it is assumed that the BS performs the conventional  maximum transmission ratio and maximum ratio combining for the DL beamforming and UL combining vectors, respectively, and the users also transmit with full power. Thus, only the phases of IRSs are optimized by use of the gradient ascent approach presented in Algorithm 2. For all the schemes, it is shown that almost the fairness is satisfied and scheme 1 significantly outperforms the other schemes. Also, it is shown that by using the proposed algorithm, the performance of the system is enhanced when a hardware impairment is assumed. Moreover, it can be seen that the scheme 4 obtains about $100\%$ gain compared to that of the scheme 3, and since the direct channels are almost blocked, optimizing only phases of the IRSs even without optimizing other parameters is beneficial.

\subsection{Impact of IRS Locations }
\begin{figure}[!t]
   \centering
   \psfrag{Two IRSs-Case 2 DL AAAA}[][][0.6]{\ Two IRSs-Case 2 DL}
   \psfrag{Single IRS DL AAAAAA}[][][0.6]{\!\!\!\!\!Single IRS DL}
   \psfrag{Two IRSs-Case 1 DL AAA}[][][0.6]{\ \ \ Two IRSs-Case 1 DL}
   \psfrag{Two IRSs-Case 1 UL AAAA}[][][0.6]{\ Two IRSs-Case 1 UL}
   \psfrag{Single IRS UL AAAAAA}[][][0.6]{\!\!\!\!\!Single IRS UL}
   \psfrag{Two IRSs-Case 2 UL AAAA}[][][0.6]{\ Two IRSs-Case 2 UL}
   \psfrag{x=-100(m)}[][][0.52]{{$X_{\text{IRS}}$=-100(m)}}
   \psfrag{x=100(m)}[][][0.52]{{$X_{\text{IRS}}$=100(m)}}
   \psfrag{X}[][][0.76]{$X_{\text{IRS}}$}
   \psfrag{SSR [bits/s/Hz]}[][][0.76]{{Sum-Rate [bpcu]}}
\includegraphics[scale=0.73]{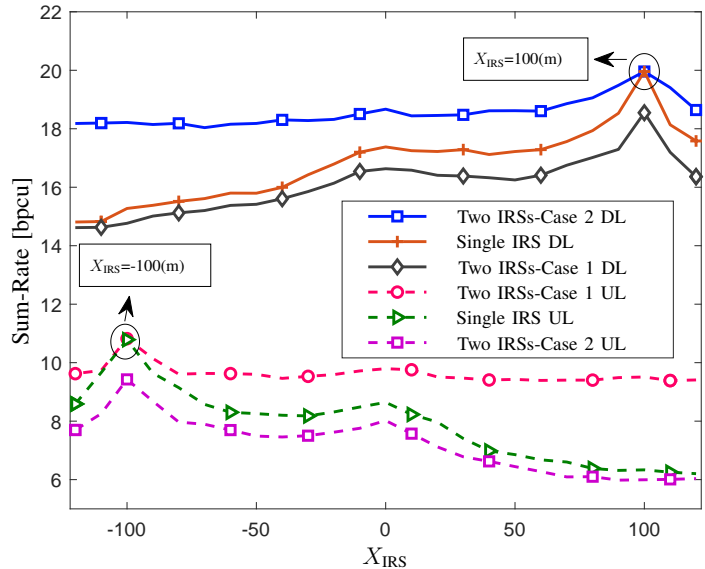}
\caption{Achievable sum rates of UL and DL versus  IRSs location for single IRS and two IRSs.}
\label{fig:IRSlocations}
\end{figure}
Finally, Fig. \ref{fig:IRSlocations} studies the impact of IRSs locations on the uplink and downlink sum-rate, respectively. It is assumed that the locations of $2$ downlink users are set to $(100,24)$ and $(100,15)$, which is called downlink users \textit{zone}, and $3$ uplink users are positioned at as $(-100,24)$, $(-100,15)$, and $(-100,25)$, respectively, which is also called uplink users \textit{zone}. To analyze the impact of IRSs locations in two-dimensional constrained space, two scenarios of  \textit{single} IRS and \textit{two} IRSs are considered. For the single IRS, it is assumed that the IRS has  24 elements, and is located at $(X_{\text{IRS}},20)$ where $X_{\text{IRS}} \in [-120,120]$, i.e. its location can change on the line from  $(-120,20)$ to $(120,20)$. For the two IRSs, to have a fair comparison, it is assumed that size of each IRS is 12 and two cases are considered; 1) IRS 1 moves from $(120,20)$ to $(-120,20)$ while IRS 2 is fixed at $(-120,20)$, 2) IRS 2 moves from $(-120,20)$ to $(120,20)$, while IRS 1 is fixed at $(120,20)$. Therefore, the uplink sum-rate and downlink sum-rate of single and two IRSs scenarios versus the location of varying IRS is depicted in Fig. \ref{fig:IRSlocations}. It can be observed that for the two IRSs cases, \textcolor{black}{when the location of the IRS 2 changes from the uplink users zone to the downlink users zone,} the downlink sum-rate increases and achieves its maximum at location $(100,20)$ where the IRS 1 is placed. At this point, as depicted in Fig. \ref{fig:IRSlocations}, the downlink sum-rate meets the performance of the single IRS scenario. Also, for this case, the uplink sum-rate decreases as the IRS 2 goes farther from the uplink users zone. Moreover, for the case 1, by moving the IRS 1 from downlink users zone to the uplink users zone, the uplink sum-rate increases and achieves its largest value at $(-100,20)$, and it is clear that the downlink sum-rate decreases, as well. Also, in case 1, the performance of two IRSs is the same as the single IRS at $(-100,20)$. As a result, in the two IRSs cases, due to the presence of one IRS in either uplink or downlink users zone, \textcolor{black}{we can conclude that despite the double number of elements of a single IRS compared with the number of elements for each IRS in cases 1 and 2, the two IRSs scenarios provide more stable and acceptable results for both uplink sum-rate and downlink sum-rate.} 
\section{Conclusions}
\label{sec:concolusion}
In this paper, we investigated the effect of deploying multiple IRSs in a FD multi-user communication system. Specifically, our study focused on joint optimization for IRSs phase shift matrices, the beamformer and combining vectors at the BS, and the transmitted power of the uplink users. \textcolor{black}{Also, it is assumed that there is hardware impairment at users, transmitter, and receiver of the BS.}
The SWSR maximization problem subject to the maximum power constraints at the BS and the uplink users was considered, and an iterative algorithm was proposed. Due to the non-convexity of the optimization problem, we tackled the problem by utilizing the alternating optimization method wherein the WMMSE approach was also used. We firstly transformed the optimization problem into several convex sub-problems and handled them by applying the Lagrangian multiplier method to analytically derive the optimal solutions. Moreover, in our proposed algorithm, the optimized phase shifts of IRSs were obtained via a gradient ascent-based method by solving an unconstrained equivalent problem. The complexity of the overall proposed algorithm was discussed, and its convergence was verified through numerical results. Finally, the effects of the transmission power of the BS and the uplink users, the size and the location of IRSs were discussed and compared for various topologies to clarify performance enhancement of the proposed algorithm. \textcolor{black}{Moreover, by using multiple optimized IRSs, SWSR is improved when the users and the BS have hardware impairment.} It is concluded that utilizing multiple distributed IRSs in a FD scenario is more beneficial than using a centralized single IRS. 

\bibliography{ref}

\begin{thebibliography}{10}
\providecommand{\url}[1]{#1}
\csname url@samestyle\endcsname
\providecommand{\newblock}{\relax}
\providecommand{\bibinfo}[2]{#2}
\providecommand{\BIBentrySTDinterwordspacing}{\spaceskip=0pt\relax}
\providecommand{\BIBentryALTinterwordstretchfactor}{4}
\providecommand{\BIBentryALTinterwordspacing}{\spaceskip=\fontdimen2\font plus
\BIBentryALTinterwordstretchfactor\fontdimen3\font minus
  \fontdimen4\font\relax}
\providecommand{\BIBforeignlanguage}[2]{{%
\expandafter\ifx\csname l@#1\endcsname\relax
\typeout{** WARNING: IEEEtran.bst: No hyphenation pattern has been}%
\typeout{** loaded for the language `#1'. Using the pattern for}%
\typeout{** the default language instead.}%
\else
\language=\csname l@#1\endcsname
\fi
#2}}
\providecommand{\BIBdecl}{\relax}
\BIBdecl

\bibitem{cellfreeMarzetta}
E.~{Nayebi}, A.~{Ashikhmin}, T.~L. {Marzetta}, H.~{Yang}, and B.~D. {Rao},
  ``{Precoding and power optimization in cell-free massive MIMO systems},''
  \emph{IEEE Trans. Wireless Commun.}, vol.~16, no.~7, pp. 4445--4459, May
  2017.

\bibitem{KwanCellFree}
J.~{Zhang}, J.~{Fan}, B.~{Ai}, and D.~W.~K. {Ng}, ``{NOMA-based cell-free
  massive MIMO over spatially correlated rician fading channels},'' in
  \emph{IEEE International Conference on Communications (ICC)}, Dublin,
  Ireland, Jul. 2020, pp. 1--6.

\bibitem{rajatheva2020white}
N.~Rajatheva, I.~Atzeni, E.~Bjornson, A.~Bourdoux, S.~Buzzi, J.-B. Dore,
  S.~Erkucuk, M.~Fuentes, K.~Guan, Y.~Hu \emph{et~al.}, ``{White paper on
  broadband connectivity in {6G}},'' \emph{arXiv preprint arXiv:2004.14247},
  2020.

\bibitem{masoumi2019performance}
H.~Masoumi and M.~J. Emadi, ``{Performance analysis of cell-free massive MIMO
  system with limited fronthaul capacity and hardware impairments},''
  \emph{IEEE Trans. Wireless Commun.}, vol.~19, no.~2, pp. 1038--1053, Nov.
  2019.

\bibitem{ngo2017cell}
H.~Q. Ngo, A.~Ashikhmin, H.~Yang, E.~G. Larsson, and T.~L. Marzetta,
  ``{Cell-free massive MIMO versus small cells},'' \emph{IEEE Trans. Wireless
  Commun.}, vol.~16, no.~3, pp. 1834--1850, Jan. 2017.

\bibitem{boccardi2014five}
F.~Boccardi, R.~W. Heath, A.~Lozano, T.~L. Marzetta, and P.~Popovski, ``{Five
  disruptive technology directions for {5G}},'' \emph{IEEE Commun. Magazine},
  vol.~52, no.~2, pp. 74--80, Feb. 2014.

\bibitem{masoumi2020cell}
H.~Masoumi, M.~J. Emadi, and S.~Buzzi, ``{Cell-free massive MIMO with underlaid
  D2D communications and low resolution ADCs},'' \emph{arXiv preprint
  arXiv:2005.10068}, 2020.

\bibitem{di2019smart}
M.~Di~Renzo, M.~Debbah, D.-T. Phan-Huy, A.~Zappone, M.-S. Alouini, C.~Yuen,
  V.~Sciancalepore, G.~C. Alexandropoulos, J.~Hoydis, H.~Gacanin \emph{et~al.},
  ``{Smart radio environments empowered by reconfigurable {AI} meta-surfaces:
  {A}n idea whose time has come},'' \emph{EURASIP Journal on Wireless
  Communications and Networking}, vol. May 2019, no.~1, pp. 1--20, 2019.

\bibitem{OFDMmeets2020}
Y.~{Yang}, B.~{Zheng}, S.~{Zhang}, and R.~{Zhang}, ``{Intelligent reflecting
  surface meets OFDM: Protocol design and rate maximization},'' \emph{IEEE
  Trans. Commun.}, vol.~68, no.~7, pp. 4522--4535, Jul. 2020.

\bibitem{lu2020intelligentCovert}
X.~Lu, E.~Hossain, T.~Shafique, S.~Feng, H.~Jiang, and D.~Niyato,
  ``{Intelligent reflecting surface (IRS)-enabled covert communications in
  wireless networks},'' \emph{Early Access in IEEE Netw.}, 2020.

\bibitem{9163399}
M.~A. {Saeidi} and M.~J. {Emadi}, ``{IRS-based secrecy rate analysis in
  presence of an energy harvesting eavesdropper},'' in \emph{Iran Workshop on
  Communication and Information Theory (IWCIT)}, Tehran, Iran, May 2020, pp.
  1--5.

\bibitem{KwanSecure}
X.~{Yu}, D.~{Xu}, Y.~{Sun}, D.~W.~K. {Ng}, and R.~{Schober}, ``{Robust and
  secure wireless communications via intelligent reflecting surfaces},''
  \emph{Early Access in IEEE J. Sel. Areas Commun.}, pp. 1--1, Jul. 2020.

\bibitem{di2020smartSurvey}
M.~Di~Renzo, A.~Zappone, M.~Debbah, M.-S. Alouini, C.~Yuen, J.~de~Rosny, and
  S.~Tretyakov, ``{Smart radio environments empowered by reconfigurable
  intelligent surfaces: How it works, state of research, and road ahead},''
  \emph{arXiv preprint arXiv:2004.09352}, 2020.

\bibitem{emadiRelayVsIRS}
M.~J. Emadi and H.~Masoumi, ``{Performance analysis of cooperative SWIPT
  system: Intelligent reflecting surface versus decode-and-forward},''
  \emph{AUT Journal of Modeling and Simulation}, Sep. 2019.

\bibitem{huang2019reconfigurable}
C.~Huang, A.~Zappone, G.~C. Alexandropoulos, M.~Debbah, and C.~Yuen,
  ``{Reconfigurable intelligent surfaces for energy efficiency in wireless
  communication},'' \emph{IEEE Trans. Wireless Commun.}, vol.~18, no.~8, pp.
  4157--4170, Jun. 2019.

\bibitem{bjornson2020power}
E.~Bj{\"o}rnson and L.~Sanguinetti, ``{Power scaling laws and near-field
  behaviors of massive MIMO and intelligent reflecting surfaces},'' \emph{arXiv
  preprint arXiv:2002.04960}, 2020.

\bibitem{IRS-CF}
Z.~{Zhang} and L.~{Dai}, ``Capacity improvement in wideband reconfigurable
  intelligent surface-aided cell-free network,'' in \emph{IEEE 21st
  International Workshop on Signal Processing Advances in Wireless
  Communications {(SPAWC)}}, Atlanta, GA, USA, May 2020, pp. 1--5.

\bibitem{ge2020joint}
L.~Ge, P.~Dong, H.~Zhang, J.-B. Wang, and X.~You, ``Joint beamforming and
  trajectory optimization for intelligent reflecting surfaces-assisted {UAV}
  communications,'' \emph{IEEE Access}, vol.~8, pp. 78\,702--78\,712, Apr.
  2020.

\bibitem{KwanUAV}
Y.~{Cai}, Z.~{Wei}, S.~{Hu}, D.~W.~K. {Ng}, and J.~{Yuan}, ``{Resource
  allocation for power-efficient IRS-assisted UAV communications},'' in
  \emph{IEEE International Conference on Communications Workshops (ICC
  Workshops)}, Dublin, Ireland, Jul. 2020, pp. 1--7.

\bibitem{IRS-TwoWay}
Y.~{Zhang}, C.~{Zhong}, Z.~{Zhang}, and W.~{Lu}, ``Sum rate optimization for
  two way communications with intelligent reflecting surface,'' \emph{IEEE
  Commun. Lett.}, vol.~24, no.~5, pp. 1090--1094, Mar. 2020.

\bibitem{MIMOpowerIRS}
C.~Pan, H.~Ren, K.~Wang, M.~Elkashlan, A.~Nallanathan, J.~Wang, and L.~Hanzo,
  ``{Intelligent reflecting surface aided MIMO broadcasting for simultaneous
  wireless information and power transfer},'' \emph{IEEE J. Sel. Areas
  Commun.}, 2020, in Press.

\bibitem{zhao2020backscatter}
W.~{Zhao}, G.~{Wang}, S.~{Atapattu}, T.~A. {Tsiftsis}, and C.~{Tellambura},
  ``{Is backscatter link stronger than direct link in reconfigurable
  intelligent surface-assisted system?}'' \emph{IEEE Commun. Lett.}, vol.~24,
  no.~6, pp. 1342--1346, Mar. 2020.

\bibitem{Multipoint}
M.~Hua, Q.~Wu, D.~W.~K. Ng, J.~Zhao, and L.~Yang, ``{Intelligent reflecting
  surface-aided joint processing coordinated multipoint transmission},''
  \emph{arXiv preprint arXiv:2003.13909}, 2020.

\bibitem{RuiZhang-Network}
Q.~Wu and R.~Zhang, ``{Intelligent reflecting surface enhanced wireless network
  via joint active and passive beamforming},'' \emph{IEEE Trans. Wireless
  Commun.}, vol.~18, no.~11, pp. 5394--5409, Aug. 2019.

\bibitem{boyd2004convex}
S.~Boyd, S.~P. Boyd, and L.~Vandenberghe, \emph{{Convex Optimization}}.\hskip
  1em plus 0.5em minus 0.4em\relax Cambridge university press, 2004.

\bibitem{Multicast}
G.~{Zhou}, C.~{Pan}, H.~{Ren}, K.~{Wang}, and A.~{Nallanathan}, ``{Intelligent
  reflecting surface aided multigroup multicast MISO communication systems},''
  \emph{IEEE Trans. Signal Process.}, vol.~68, pp. 3236--3251, Apr. 2020.

\bibitem{krikidisIRS2019lowcomplexity}
C.~Psomas and I.~Krikidis, ``{Low-complexity random rotation-based schemes for
  intelligent reflecting surfaces},'' \emph{arXiv preprint arXiv:1912.10347},
  2020.

\bibitem{zappone2020resource}
S.~Buzzi, C.~D'Andrea, A.~Zappone, M.~Fresia, Y.-P. Zhang, and S.~Feng,
  ``{Resource allocation in wireless networks assisted by reconfigurable
  intelligent surfaces},'' \emph{arXiv preprint arXiv:2004.08944}, 2020.

\bibitem{FD1}
Z.~{Wei}, S.~{Sun}, X.~{Zhu}, D.~{In Kim}, and D.~W.~K. {Ng}, ``Resource
  allocation for wireless-powered full-duplex relaying systems with nonlinear
  energy harvesting efficiency,'' \emph{IEEE Trans. Veh. Technol.}, vol.~68,
  no.~12, pp. 12\,079--12\,093, Dec. 2019.

\bibitem{KrikidisFD2018heterogeneous}
C.~Skouroumounis, C.~Psomas, and I.~Krikidis, ``{Heterogeneous FD-mm-wave
  cellular networks with cell center/edge users},'' \emph{IEEE Trans. Commun.},
  vol.~67, no.~1, pp. 791--806, Sep. 2018.

\bibitem{FD_Secrecy}
Y.~{Sun}, D.~W.~K. {Ng}, J.~{Zhu}, and R.~{Schober}, ``Robust and secure
  resource allocation for full-duplex {MISO} multicarrier {NOMA} systems,''
  \emph{IEEE Trans. Commun.}, vol.~66, no.~9, pp. 4119--4137, Sep. 2018.

\bibitem{schober2020resource}
D.~Xu, X.~Yu, Y.~Sun, D.~W.~K. Ng, and R.~Schober, ``{Resource allocation for
  IRS-assisted full-duplex cognitive radio systems},'' \emph{Early Access in
  IEEE Trans. Commun.}, Sep. 2020.

\bibitem{HI}
C.~{Studer}, M.~{Wenk}, and A.~{Burg}, ``{MIMO} transmission with residual
  transmit-{RF} impairments,'' in \emph{International {ITG} Workshop on Smart
  Antennas {(WSA)}}, Bremen, Germany, Apr. 2010, pp. 189--196.

\bibitem{HI_sec}
E.~{Boshkovska}, D.~W.~K. {Ng}, L.~{Dai}, and R.~{Schober}, ``Power-efficient
  and secure {WPCNs} with hardware impairments and non-linear {EH} circuit,''
  \emph{IEEE Trans. Commun.}, vol.~66, no.~6, pp. 2642--2657, Dec. 2018.

\bibitem{HI_mMIMO}
J.~{Zhu}, D.~W.~K. {Ng}, N.~{Wang}, R.~{Schober}, and V.~K. {Bhargava},
  ``Analysis and design of secure massive {MIMO} systems in the presence of
  hardware impairments,'' \emph{IEEE Trans. Wireless Commun.}, vol.~16, no.~3,
  pp. 2001--2016, Mar. 2017.

\bibitem{ChannelEstimation}
Z.~{Wang}, L.~{Liu}, and S.~{Cui}, ``{Channel estimation for intelligent
  reflecting surface assisted multiuser communications: Framework, algorithms,
  and analysis},'' \emph{Early Access in IEEE Trans. Wireless Commun.}, Jun.
  2020.

\bibitem{WMMSE}
D.~H. Nguyen, L.~B. Le, and Z.~Han, ``{Optimal uplink and downlink channel
  assignment in a full-duplex multiuser system},'' in \emph{IEEE International
  Conference on Communications (ICC)}, Kuala Lumpur, Malaysia, May 2016, pp.
  1--6.

\bibitem{SI-1}
H.~Q. Ngo, H.~A. Suraweera, M.~Matthaiou, and E.~G. Larsson, ``{Multipair
  full-duplex relaying with massive arrays and linear processing},'' \emph{IEEE
  J. Sel. Areas Commun.}, vol.~32, no.~9, pp. 1721--1737, Jun. 2014.

\bibitem{FDRSI}
M.~{Mohammadi}, H.~A. {Suraweera}, and C.~{Tellambura}, ``{Uplink/Downlink rate
  analysis and impact of power allocation for full-duplex cloud-RANs},''
  \emph{IEEE Trans. Wireless Commun.}, vol.~17, no.~9, pp. 5774--5788, Jun.
  2018.

\bibitem{shi2011iteratively}
Q.~Shi, M.~Razaviyayn, Z.-Q. Luo, and C.~He, ``{An iteratively weighted MMSE
  approach to distributed sum-utility maximization for a MIMO interfering
  broadcast channel},'' \emph{IEEE Trans. Signal Process.}, vol.~59, no.~9, pp.
  4331--4340, Apr. 2011.

\bibitem{WMMSE-largeScale}
T.~Van~Chien, C.~Moll{\'e}n, and E.~Bj{\"o}rnson, ``{Large-scale-fading
  decoding in cellular massive MIMO systems with spatially correlated
  channels},'' \emph{IEEE Trans. Commun.}, vol.~67, no.~4, pp. 2746--2762, Dec.
  2018.

\bibitem{cvx}
M.~Grant and S.~Boyd, ``{{CVX}: Matlab Software for disciplined convex
  programming, version 2.1},'' Available at \url{http://cvxr.com/cvx}, Mar.
  2014.

\bibitem{Larsson-May2020-Weighted-SR}
H.~{Guo}, Y.~{Liang}, J.~{Chen}, and E.~G. {Larsson}, ``{Weighted sum-rate
  maximization for reconfigurable intelligent surface aided wireless
  networks},'' \emph{IEEE Trans. Wireless Commun.}, vol.~19, no.~5, pp.
  3064--3076, Feb. 2020.

\end{thebibliography}
\bibliographystyle{IEEEtran}

\end{document}